\newcommand{\indep}{\perp \!\!\! \perp}
\theoremstyle{plain}
\newtheorem{theorem}{Theorem}[section]
\newtheorem{lemma}[theorem]{Lemma}
\newtheorem{condition}{Condition}
\newtheorem{definition}{Definition}
\newenvironment{proof}[1][]
 {\if\relax\detokenize{#1}\relax
    %
  \else
  \fi
  \proofinner}
 {\endproofinner}
\begin{document}
\title{Statistical Properties of Exclusive and Non-exclusive Online Randomized Experiments using Bucket Reuse}

\author[1,2]{M\aa rten Schultzberg\thanks{The authors thanks Andreas Born, Claire Detilleux, Brian St Thomas, Michael Stein, and the colleagues in the experimentation platform team for helpful feedback and suggestions for this paper.}}
\author[1]{Oskar Kjellin}
\author[1]{Johan Rydberg}
\affil[1]{Experimentation Platform Team – Spotify}
\affil[2]{Contact: mschultzberg@spotify.com}

\maketitle

\begin{abstract}
Randomized experiments is a key part of product development in the tech industry. It is often necessary to run programs of exclusive experiments, i.e., experiments that cannot be run on the same units during the same time. These programs implies restriction on the random sampling, as units that are currently in an experiment cannot be sampled into a new one. Moreover, to technically enable this type of coordination with large populations, the units in the population are often grouped into 'buckets' and sampling is then performed on the bucket level. This paper investigates some statistical implications of both the restricted sampling and the bucket-level sampling. 
The contribution of this paper is threefold: First, bucket sampling is connected to the existing literature on randomized experiments in complex sampling designs which enables establishing properties of the difference-in-means estimator of the average treatment effect. These properties are needed for inference to the population under random sampling of buckets. Second, the bias introduced by restricting the sampling as imposed by programs of exclusive experiments, is derived. Finally, simulation results supporting the theoretical findings are presented together with recommendations on how to empirically evaluate and handle this bias.\\~\\
Keywords: average treatment effects, complex sampling, coordinated experiments, hashing, restricted sampling.
\end{abstract}

\pagebreak

\section{Introduction}
When building products and services in the internet age, it has become increasingly hard to figure out what is the 'right thing' to do. Companies like Microsoft, Netflix and Spotify serve a global audience where the 'right thing' may vary drastically between markets and demographics. To better understand what the customers need, and want, companies like these have started utilizing large scale randomized experiments. Over the last five to ten years, randomized experiments have become the de facto standard method for testing ideas. Randomized experiments, often referred to as A/B tests in this field, allow teams to test hypotheses in a scientific way, with proper risk management. This gives them the causal insight they need to systematically improve the user experience as new product changes are introduced. Today, these companies run several hundreds, or even thousands, of experiments every year  to learn about what their customers prefer. Even the smallest change can have a big impact \citep{Kohavi2017} -- and it is important that the experiment methods are efficient in terms of the amount of users needed to get a causal insight with sufficient precision. For example, Google famously tested 41 shades of blue to see which performed the best - the result increased revenue by \$200 millions \citep{hern2014}.

When experimentation scholars like Ronald A. Fisher, William Gosset ('Student'), Jerzy Neyman, and Egon S. Pearson \citep{fisher1935design, Student1908, Neyman1923, neyman1934} wrote their seminal papers on randomized experiments and inference to the units of the sample and the population, the concept of populations was central and well defined. W. Gosset (the man behind the t-test) was arguably the first product developer using randomized experiments. He used randomized experiments to optimize, among several things, the agriculture practice of Guinness (the beer manufacturer) to yield the best product in the late 1800 - early 1900. The population of fields was well known and Gosset could repeatedly (over years) sample and assign different plots to different fertilizers and seeds, to find the combinations that gave the best crops. When Neyman and Pearson generalized the math to infinite populations, it can be argued that 'populations' was more a mathematical construct that enabled deriving properties for more complicated estimators. In many fields where statistics has been applied, it has rarely been conceived as plausible that a population would ever be observed and explicitly available for repeated sampling and experimentation in practice -- especially in settings where human behaviours are studied. This is still the case in many areas of science. Often, so called, 'convenience samples' are used, still applying inference justified by random sampling to draw conclusions about some population parameter. Even in cases where the sample is random, it is rare with replications -- which of course weakens all arguments derived from repeated sampling. 

Interestingly, during the same period as many started to question statistical practices and frequentistic inference in the science community \citep{Ioannidis2005, Amrhein2019}, the tech industry rediscovered randomized experiments\citep{Kohavi2017}, inspired by the long tradition in the manufacturing industry of using so called 'a/b-tests' to optimize production. Within tech, incentives for bad practice like publication bias is less prominent \footnote{P-hacking still exists though, since it is easier to put a feature into production if there is a 'significant' experiment results to back it up.}, and it is easier to keep track of all experimentation results for meta analyses. The concepts of repeated sampling and randomized treatment assignment fits well into the ways of working in the tech industry. Quite ironically, in this setting where repeated sampling is a natural part of the practice, a new set of issues are encountered, issues that were not addressed by Fisher, Gosset, and Neyman. Perhaps a setting with data on hundreds of millions of users available with a click on a computer, was too inconceivable even for such progressive geniuses. The novel issues that are encountered are related to that these companies want to experiment more than they can. Having a fast build-measure-learn loop is key in modern product development. It is therefore critical to be able to to run experiments at any time. Even though the user bases for these companies are often in the hundreds of millions, the amount of available users for experimenting is often a limiting factor due to that the effect sizes of interest are small. This issue is amplified by the fact that, due to technical reasons, it is often required to run non-overlapping coordinated experiments. These requirements pose all sorts of new statistical questions: How can sampling and treatment assignment of so many users be randomized while making sure each user gets the correct 'treatment'? And more specifically, the questions that this paper addresses: How do technically plausible solutions to the sampling of users, in combination with programs of exclusive (non-overlapping) experiments, affect the statistical validity of the inference to the population? 

The rest of this paper is organized as follows. Section \ref{sec:coordexp} introduces the concept and challenges of coordinated experiments and the strategy that is the focus of this paper -- Bucket reuse. Section \ref{sec:number_of_buckets} investigates properties related to the number of buckets that the population is split into. Section \ref{sec:everchange} present statistical properties of the difference-in-means estimator of the average treatment effect under bucket reuse. Section \ref{sec:practical} discusses practical recommendation and Monte Carlo simulation results. Finally, Section \ref{sec:disc} gives concluding remarks.

\section{Coordinated Experiments}\label{sec:coordexp}
In most experimenting tech companies, 
many experiments are run in simultaneously.  One of the most critical tasks in the experimentation programs is to keep track of what user is in which experiments at any give time point. It is not only a matter of that one user can be in several experiments at once, but also that some users should not be allowed to be part of certain experiments simultaneously. For both technical and design reasons, it is not always possible to run several experiments on one user at one time. In the Spotify-app, e.g, one team might want to change the theme of the app to blue instead of green, while another team are trying to find the optimal green color. Of course, a user can only have one color per item at one time point. To be able to iterate on both these ideas at the same time, it is key to be able to experiment on both these ideas simultaneously. To achieve this, the experiments must be conducted on different users. In other words, the samples must be disjoint groups of users, and each sample only exposed to one experiment.  This is a well known challenge in experimenting tech-companies, see for example Google's solution in \cite{Tang2010}. 

Throughout this paper, we will focus on what we call \textit{programs of exclusive experiments}. A program of exclusive experiments is a set of experiments run over time where no users is exposed to more than one of the experiments in the program at a given time. To build intuition for the limitations imposed by running programs of exclusive experiments, it is helpful to introduce the concepts of paths. A path is simply a sequence of experiments that a unit can be in. If experiments are ran non-exclusively the number of paths is $2^{\text{(number of experiments)}}$, since each user either is or is not sampled to each experiment. In a program of exclusive experiments, the number of paths depends on how the program is run in terms of starts and stops of experiments. Figure \ref{fig:paths} illustrates a program of exclusive experiments containing 5 experiments. Below the experiments, the possible paths are displayed. For example, it is not possible to be sampled into both experiment 3 and 4 as they overlap in time and this is program of exclusive experiment. It is possible to be sampled experiment 1 and 4, since these are not overlapping in time.
\begin{figure}[hbt!]\centering
\caption{Paths of experiences possible during a program of exclusive experiments containing 5 exclusive experiments over time.}\label{fig:paths}
\includegraphics[scale=1.0, trim={5cm 11cm 3cm 4.3cm},page=8 , clip=TRUE]{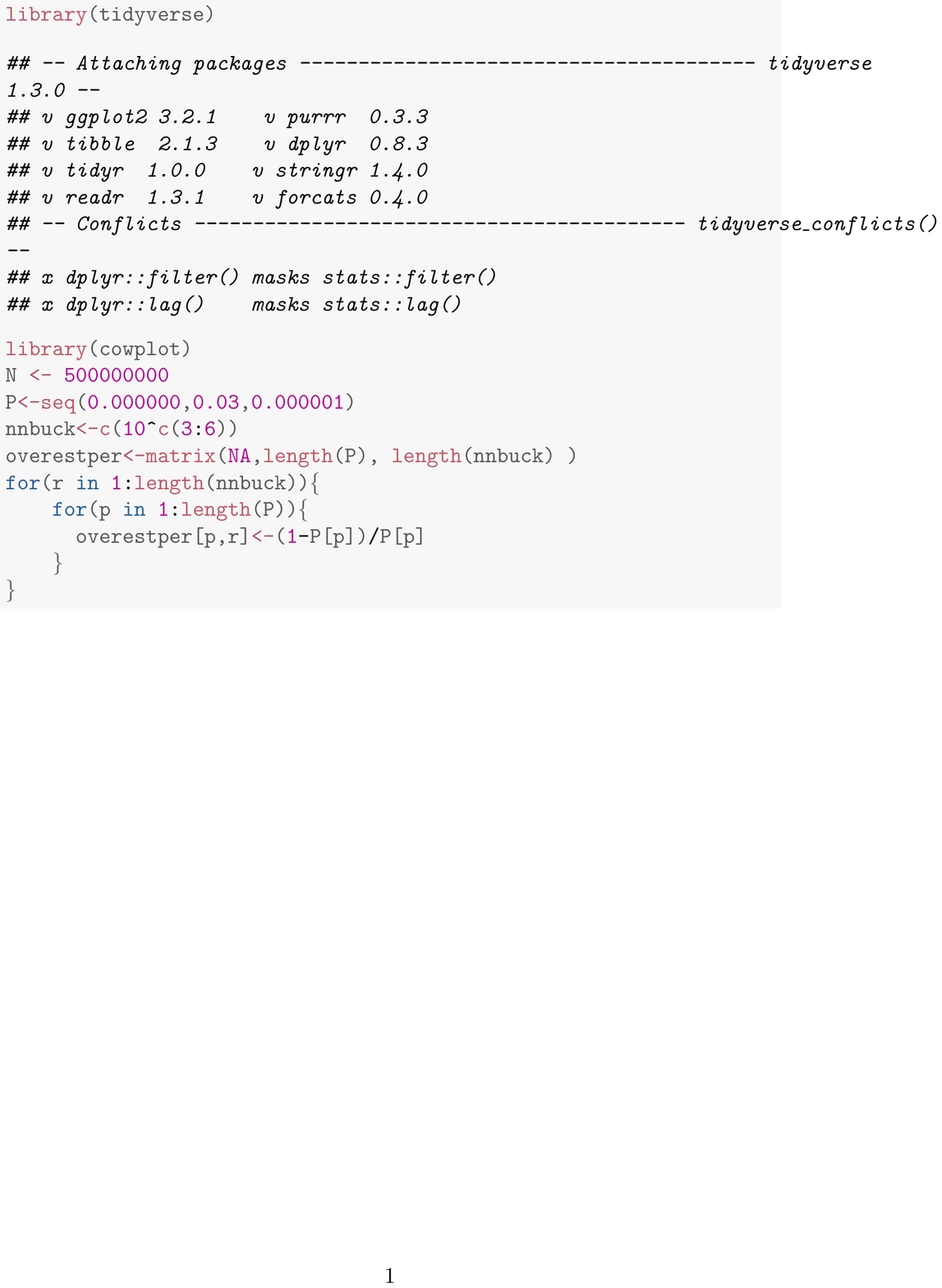}
\end{figure}
The number of unique possible paths explodes combinatorically after a relative short time period in most settings, and only a small partition of the possible paths can be taken by any units.

\subsection{The Technical Challenge of Coordination}
Designing a coordination tool that allows random sampling of users from the population, with and without restrictions to include or exclude users in certain other experiments, quickly becomes a technical challenge. For example, we have several hundred million users at Spotify, and at any given time point we run hundreds of experiments. Big data storage can easily store the assignment of each unit in each experiment. However, it is difficult to keep explicit records and look up who is in what experiment, and who should be allowed/not allowed to be sampled into a new experiment, in a manner that is sufficiently fast and reliable. At Spotify, for example, experimentation is a central part of all features and having any database in critical request paths puts experiments (or the feature) at risk of not getting the correct assignments if there would be some problems with the database or replication lag. Keeping all users assigned to all the experiments in memory becomes infeasible given the large number of users. The ability to keep all the ids of these users in memory depends on the number of bytes each id takes up, which may vary depending on what id the experiment targets on (user id being one out of multiple).

A common way of decreasing the memory load is to hash the users' id with a hash function \citep{knuth1998art} into a smaller set of groups of users. There exists multiple different types of hash functions, some are suited for experimentation whereas some are not. The primary attributes required of a hash function to be suitable for experimentation are: 1. Consistent for the same input. 2. No correlation between input and output (changing one bit of the input produces a completely new random output). 3. All outputs are equally likely (uniformity).
Given these properties, we can apply the hash function to the id which will give us bytes that are random (but consistent for the same id). We then take these bytes and turn them into an integer. This yields a simple way to consistently take an id and turn it into a random integer. To create meaningful groupings the unbounded output integer can be mapped into a fixed range of integers by applying a modulo operation. For example, we can transform any integer X into an integer between zero and hundred by applying $X \mod 100$. Altogether this gives us a way of uniformly spreading users based on their ids into any fixed number of groups, identified by the interger output of the modulo operation -- these groups are often called 'buckets'. Figure \ref{fig:hash} illustrates this mechanism. Hashing is a widely used method for handling very large groups of users in efficient manners.
\begin{figure}[h!]\centering
\caption{Illustration of the mapping between user id's and buckets using hashing with a random salt. }\label{fig:hash}
\includegraphics[scale=1., trim={5cm 14cm 5cm 4cm},page=9 , clip=TRUE]{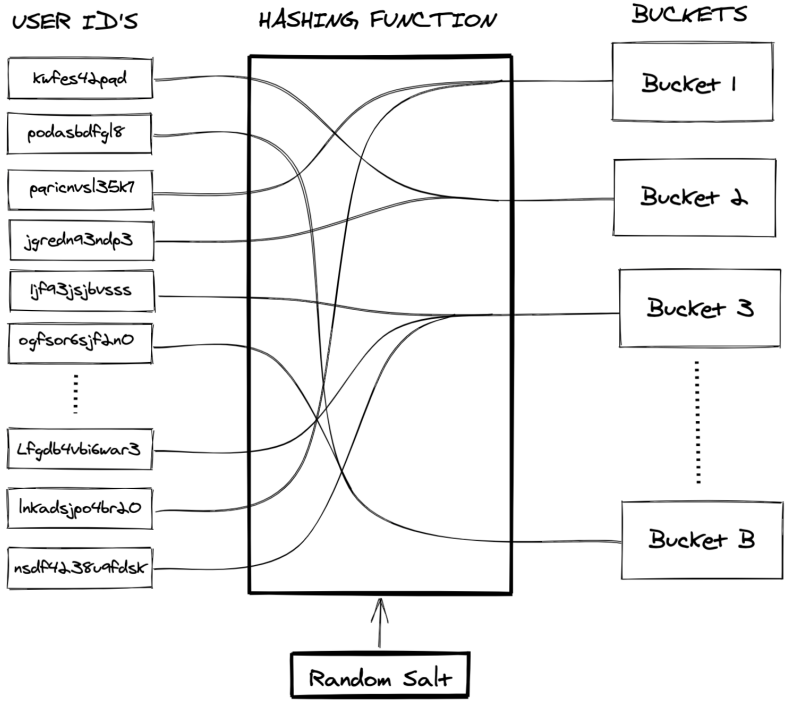}
\end{figure}

In experimentation, the buckets created by hashing can be used to efficiently sample users from the population. Given that the users are randomly assigned to buckets, we can easily assign 10\% of the buckets into one experiment and that experiment would on average get 10\% of the users. Assume we take the first 10\% of the buckets into the first experiment and continue taking contiguous sequences of buckets into the following 9 experiments. After doing this we have assigned all buckets into one experiment and those experiments will have no overlap in user ids (an id will only hash into 1 bucket ever). Now all users have seen exactly one experiment and which experiment that was is a deterministic function of what bucket the user hashed to.
If the initial buckets are randomly created, these 10 samples are random. However, once at least one experiment is stopped, and we want to start a new experiment – we would like to sample independently of the first round of experiment samples. A user may react differently if first exposed to experiment 1 and then to 3, vs exposed to 2 and then to 3. For this reason, it is desirable to randomly spread out all users from the 10 first experiments into all new experiments to average out this ordering effect. The most simple way to do this is to suffix the user that we hash with a random string (called a salt). We then suffix all the users with the same salt and when we want to rerandomize users into new buckets, we simply stop all experiments and introduce a new salt. Stopping all experiments running on A before introducing B is appealing, but it puts strong restrictions on the experimentation program. Assuming many teams are experimenting in the programs, as is often the case in the large tech-companies, a team that wants to run many short experiments will be blocked by teams that run longer experiments.

\subsection{Coordinated Experiments using Bucket Reuse}
A simpler solution than introducing new salts, as dicsussed above, is to use so called \textit{bucket reuse}. Bucket reuse consists of the following steps: Decide on a number of buckets ($B$). Take the unique user id and hash it together with a random salt into $B$ 'buckets'. To start an experiment, randomly sample buckets from the $B$ buckets, or what ever subset that is available at the time of sampling. As the name suggests, these buckets are \textit{reused} over time which means that no stopping is needed to start new experiments. For non-exclusive experiments this simply implies that the random sampling is done on the bucket level, rather than unit level. For experiments in programs exclusive of experiments, this implies sampling from the set of available buckets, i.e., the set of buckets that are not in an ongoing experiment at the time of sampling. Note that when we are talking about bucket reuse in this paper, we always mean sampling of buckets. Importantly, the buckets are not affecting the treatment assignment. That is, once the sample of buckets is obtained, the treatment assignment is randomly assigned on the unit level, independently of the buckets. 

Bucket reuse has been, and still is, used by several tech companies, but the statistical properties have not, to the best of our knowledge, been properly established. Clearly, using, and reusing, buckets will affect the statistical properties of common treatment effect estimators. It is important to remember that the reason for why large companies are investing so extensively into experimentation, is that the randomized sampling and treatment assignment is the best known tool for learning about causal effects in a statistically valid and robust way. If the efficient technical implementations put restrictions on the randomization, it is important to figure out what the statistical implications are, so that we do not build away the properties that we wanted in the first place, in our eagerness to solve the technical problem. The rest of this paper investigates the statistical properties under bucket reuse in non-exclusive experiments and programs of exclusive experiments.

\section{Choosing the Number of Buckets}\label{sec:number_of_buckets}
It is quite clear, even before the investigation, that the larger the number of buckets, the smaller the difference between bucket reuse and simple random sampling of units will be. If we have one bucket per unit, bucket sampling and unit sampling is equivalent. However, recall that the hashing was introduced because it is too hard to keep track of all units individually. For example, assume we have 1 billion units. At least one bit is required to represent a bucket. This means that 125 MB is needed to represent one experiment with one bucket per unit. 125 MB might not sound too bad, but this puts strict limitations on the number of concurrent experiments that can be run. If we want to run 1000 experiments concurrently, 125 GB is needed to represent this. As the number of concurrent experiments and number of units increase with time, this quickly becomes infeasible.  A virtual machine can easily get 125 GB of ram but at this point the cost starts to climb as well. Auto-scaling of virtual machines is a key part of being able to respond to traffic spikes. If each new machine that starts up needs to download 125 GB, the time it takes to scale up will take longer than wanted.

From a statistical perspective, the number of buckets used in bucket reuse changes the number of unique samples that can be drawn, and thereby the number of treatment-effect estimates that can be observed. As a small illustrative example, say that we have a population of 20 units, and that we want to draw a sample of 10 units. If we sample units individually, there are $\tbinom{20}{10}=184756$ unique samples. However, if we split the population into buckets of size 5, so that we have 4 distinct buckets, then the number of unique samples of size 10 is $\tbinom{4}{2}=6$. If we want to sample X number of users, we sample the number of buckets of which the sum of users comes closes to the desired number of users X. For example, with 1000 buckets the smallest proportion of the population that can be sampled is 0.1\%. Generally, any amount of the population that does not have modulo zero with $1/B$ cannot be sampled exactly.

When an experiment stops, we want the corresponding buckets to be spread into new experiments proportionally. That is, if we start X new experiments we want to randomly spread the buckets from the previously stopped experiment into the X new ones proportionally to the size of those new experiments. If a stopped experiment would only contain one bucket we could not spread the sample from this experiment across new experiments at all, instead that bucket would end up in one, or none, of the new experiments. If the number of buckets in an experiment is substantially larger than the number of started experiments, the buckets can be spread approximately according to the relative size of the new experiments. 
Consider the following example, displayed in Figure \ref{fig:spread}. At time point t-3 100\% of the population is used, and an experiment (A) is stopped freeing up 20\% of the population. At time point t-2 there is 20\% of the population available and no experiment is started or stopped. At time point t-1 one experiment (B) of 10\% is stopped and no experiment is started, implying 30\% available space in total. Say now that we want to start 2 experiments at time point t. The first experiment (C) will use 10\% of the population, and the second experiment (D) will use 5\% of the population. This implies that experiment C should in expectation contain 10/30=1/3 of the buckets in A and 1/3 of the buckets in B. Correspondingly, experiment D should  in expectation contain 5/30= 1/6 of the buckets in A and 1/6 of the buckets in B. For this to be possible, A and B must contain enough buckets for it to be possible to split them into 6. Of course, since we always have a finite number of buckets, there are many ratios of buckets that cannot be exactly achieved, but if the number of buckets is large, randomly selecting the buckets to C and D from all buckets from A and B will ensure the right amount on average with small deviations. 
\begin{figure}[ht!]\centering
\caption{Illustration of how bucket size restricts the randomization of a sample from an old experiment into a the sample of a new experiment. The dotted lines indicate hypothetical splits of a sample into smaller parts. }\label{fig:spread}
\includegraphics[scale=1., trim={5cm 15cm 3cm 4.5cm},page=10 , clip=TRUE]{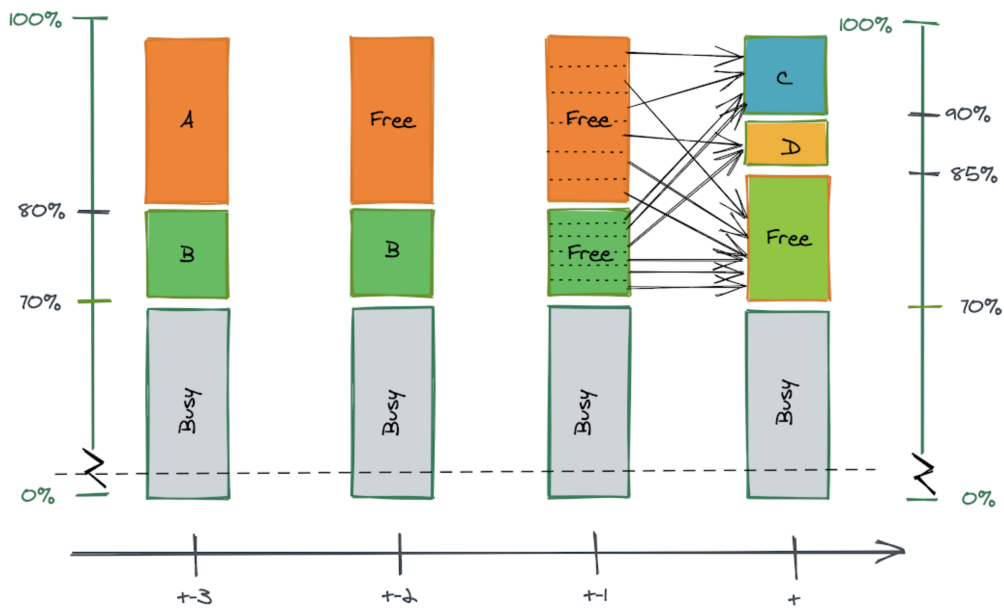}
\end{figure}
\FloatBarrier
The ability to proportionally split a sample that was in a previous experiment into the samples of new experiments is dependent on the size of the smallest experiment and the number of buckets (or equivalently, the size of the buckets). The smaller the smallest experiment is allowed to be, the more buckets we need to split our population into to be able to spread buckets from one of the stopped experiments of the smallest size proportionally into a new experiment of the smallest size. Figure \ref{fig:smallest_exp} displays this relation. For example, let the smallest allowed experiment be 0.1\% of the population. If we want to be able to stop an experiment (A) of size 0.1\%, and start a new one (B) of size 0.1\%. Then we want to randomly select 0.1\% of the 0.1\% buckets from the population that are in experiment A to put in the sample of experiment B. That is we want to be able to draw 0.1\% (1/1000) of the buckets in A – which of course implies that we have to have at least 1000 buckets in A. And if 0.1\% of the population is equal to 1000 buckets, then the total number of buckets needed is 1000/0.001=1000000 (1M). As we can see from the figure, the required number of total buckets increase very fast as the size of the smallest experiment approach 0\%. The corresponding required total number of buckets if the smallest experiment is allowed to be 0.05\% is 4M.
\begin{figure}[ht!]\centering
\caption{The relation between the smallest possible relative experiment sample size and number of required buckets for non-restricted randomization of previous samples into new samples.}\label{fig:smallest_exp}
\includegraphics[scale=0.9, trim={5cm 16cm 3cm 4cm},page=3 , clip=TRUE]{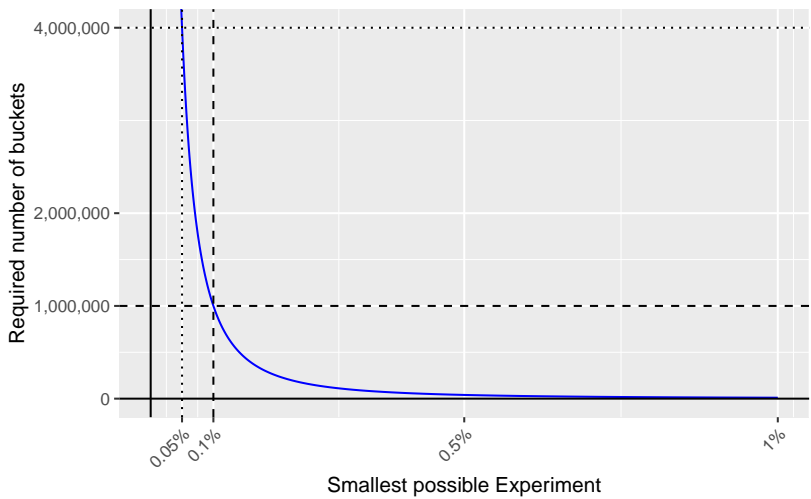}
\end{figure}
Note that the smallest allowed experiment size does not have to be enforced, but buckets from experiments that are smaller than that are not possible to include proportionally in new small experiments. In practice at Spotify, it is (so far) uncommon with very small experiments, and even more so with many such small experiments in a sequence.\\

Using bucket reuse, rather than sampling units directly, cannot increase the amount of units that go through bad experiences as long as an experiment sample is always larger than one bucket. However, the use of buckets does impact the probability of including disproportionally many units with previously strong experiences in one experiment. The following is an extreme example to illustrate this relation. Assume there are 0.1 percentage points (pp) units of the population with a neutral previous experience available for sampling, and that an experiment containing 0.1pp of units, that has had a 'bad' experience stops. That is, there is 0.2pp of the population available in total. Say we want to start an experiment of size 0.1pp, then the expected proportion of units with a previously neutral and bad experience in our random sample is 50\% each. The probability that the sample proportion is close to 50/50 neutral and bad in the sample depends on the number of buckets that makes up 0.1pp. 
Note that the probability of getting any specific proportion of 'bad buckets' can be exactly described using the hypergenometric distribution. Let $N_{bad}$ be the number of bad bucket in the sample and let $\rho_{bad}$ be the proportion of bad buckets in the sample, then
\begin{align*}
\rho_{bad}=\frac{N_{bad}}{\frac{N_S}{N_B}},
\end{align*}
and it follows that $Pr(\rho_{bad}=r) = Pr(N_{bad}=N_S*r),$ where 
$N_{bad}\sim \text{HypGeom}(B_{available}, N_{bad}, N_S/N_B)$, where $B_{available}$ is the number of available buckets to sample from.
If 0.1pp of the population is made up of one bucket, we will sample one of the two buckets, one is bad and one is neutral. The probability of getting a sample of 100\% bad buckets is 50\%. If 0.1\% is 2 buckets, then the probability of getting 100\% bad buckets in the sample is 16.7\%. Using the hypergeometric distribution, we can easily calculate the exact probabilities for getting different proportions of bad buckets in this example. The following figure displays the probability that the sample contains 50\%+- three different margins, 1pp, 2pp, and 3pp bad buckets.
\begin{figure}[ht!]\centering
\caption{The probability of obtaining 50\%$\pm$ an error margin of 'bad' buckets as a function of the number of buckets in the experiments. }\label{fig:hyper}
\includegraphics[scale=0.9, trim={5cm 16cm 3cm 4cm},page=4 , clip=TRUE]{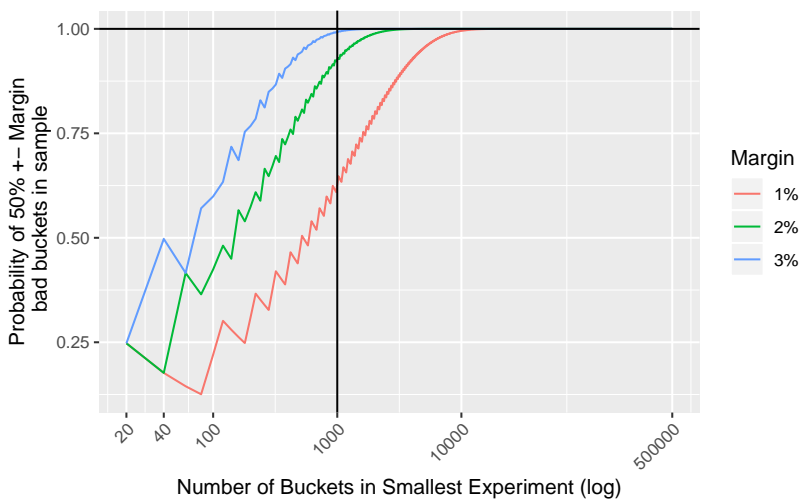}
\end{figure}
As we can see, the probability of sampling bad buckets within 1,2, and 3pp from 50 percent goes to one when the number of buckets increases. Clearly there is a difference in probability of having disproportionally many bad buckets if the smallest experiments contain 1000 bucket or 500k buckets. The probability of having more +-1pp too few/too many bad buckets is quite large with 1000 buckets. However, the probability of being more than +-3pp off is almost zero. When an experiment of size 0.1\% of the population contains 1000 buckets, the probability that the proportion of bad buckets is between 47\% and 53\% is 99.27\% in this scenario.

\subsection{Overlap across experiment programs}
If the same buckets are used in all programs of exclusive experiments and all non-exclusive experiment, the choice of the number of buckets needs to account for the relation between the sampling for experiments run within different programs or non-exclusive experiments. There is no dependency between the samples for experiments run in different programs or non-exclusive experiments, as the sampling is randomized within each program and for each non-exclusive experiment. This implies that the overlap of two experiments that are not in the same program is $\text{(proportion of population for experiment 1)}\times\text{(proportion of population for experiment 2)}$ in expectation. However, the number of buckets affects the probability distribution of the overlap and therefore the probability of the deviations from this expected value that will occur in practice. Again, the probability distributions can be described completely by Hypergenometric distributions. Assume, e.g., that a standard sample size for an experiment is 5\% of the population. Let experiments 1 and 2 be of size 5\% and run in different programs of exclusive experiments. Define $X=\text{The number of buckets from experiment 1 in experiment 2}$, it follows that $X\sim \text{HypGeom}(B, 0.05B, 0.05B)$. The expected overlap is $0.05^2=0.0025=0.25$ percentage points (pp) of the population. Table \ref{tab:hyp2} displays the probabilities that the sample in experiment 2 contains between 0.2pp and 0.03pp of the sample of experiment 1 as a function of number of buckets in the population.
\begin{table}[hbt!]\centering
\begin{tabular}{rrr}
  \hline
& \multicolumn{2}{c}{Experiment Size}\\
     \hline
 Number of buckets& 5pp & 10pp \\ 

  in population ($B$)  & \multicolumn{2}{c}{Overlap probability}\\
  \hline
1000 & 0.23 & 0.27 \\ 
   2000 & 0.34 & 0.37 \\ 
   10000 & 0.70 & 0.73 \\ 
  50000 & 0.98 & 0.99 \\ 
  100000 & 1.00 & 1.00 \\ 
   \hline
\end{tabular}
\caption{Probability of an overlap within 0.1pp from the expected overlap for the samples of two experiments run in two different programs of exclusive experiments as a function of the number of buckets. Two sizes of experiments are considered 5 and 10 pp of the population.}\label{tab:hyp2}
\end{table}
As expected, the variation in the distribution around the expected value decreases as a function of the number of buckets in the population. With 10000 buckets, the probability of the overlap within 0.1pp from the expected overlap is around 70\% for both experiment sizes. With 100k buckets, the probability has converged to one. In other words, with 100k buckets or larger, the overlap between the samples of experiments run in different programs is very close to the expected value of the overlap, and thus the bucket structure is not severely affecting the properties of the overlap in comparison with sampling of units.

\section{Properties of the difference-in-means estimator of the Population Average Treatment Effect under bucket reuse} \label{sec:everchange}

Let $N$ be the population size, and let $W_i$ be and indicator variable taking the value 0 if unit $i$ is in control and 1 if unit $i$ is in treatment. 
Let $Y^t_i$ denote the outcome of unit $i$ at time point $t$. Use $Y^t_i(W=0)$ and $Y^t_i(W=1)$ to denote the potential outcomes \citep{Rubin2005} of unit $i$ under control and treatment, respectively, for $i=1,...,N$.
The population Average Treatment Effect (ATE) at time point $t$ is defined as
\begin{equation}
ATE_t=\frac{1}{N}\left( \sum_{i=1}^N Y^t_i(W_i=1) - \sum_{i=1}^N Y^t_i(W_i=0)  \right).
\end{equation} 
We include the time point here to emphasize that the effect of one treatment might be different from one time point to another. More specifically, we are not mainly thinking about changes in the ATE due to season or random variation over time, but rather that the population evolves over time and might therefore react differently to a treatment over time. This is especially true in the tech industry where the development is very fast. For example, the effect of improving the resolution of a camera on a smartphone by a factor of 10, probably had a huge impact on customer satisfaction back in 2008. Making the same improvement today, when almost everyone has a super camera in their smartphone, the perceived improvement is probably a lot smaller. Similar changes in the ATE might occur on a shorter time horizon. For example, the increased overall user satisfaction from improving a feature in a mobile app might depend on how easy that feature is to find within the app. If an experiment evaluating an improvement of a feature is run before or after an update that changes the visibility of the same feature, might substantially affect the treatment effect from the improvement. It can be argued that this is actually a different treatment effect, or even an interaction effect. However, from the perspective of the experimenter, it might not be possible to keep track of all possibly relevant changes, and therefore more helpful to simply view the ATE as a process over time . 

Incorporating the possible change in ATE over time is crucial for studying the implications of program of exclusive experiments on the unbiasedness of the difference-in-means ATE-estimator. Let $\mathbf{S}$ be the set of units in a sample. Define the difference in means estimator with two equally sized treatment groups as
\begin{equation}
\widehat{ATE}_t = \frac{2}{N_S}\left(\sum_{i\in \mathbf{S}, W_i=1}Y^t_i(1)-\sum_{i\in \mathbf{S}, W_i=0}Y^t_i(0) \right),
\end{equation}
where $N_S$ is the sample size.
The properties of this estimator under bucket resuse with and without restricted random sampling is the focus of the remainder of this paper.

\subsection{Complete random sampling of Buckets}\label{sec:BR_properties_CR}
Experiments that are run outside of programs of exclusive experiments, i.e., non-exclusive experiments, are run on random samples of buckets from the full population. 
Random sampling of buckets, instead of units, is equivalent to cluster sampling, see, e.g., \cite{lohr2019sampling} for a recent introduction. That is, a bucket is a cluster for all statistical intentions and purposes. There is a related literature on experiments 'embedded' in complex sampling designs that addresses the analysis of randomized experiments based on cluster-sampled samples \citep{horvitz1952, kish1974, vandenBrakel1998, brakel2005} which is directly applicable for bucket reuse under random sampling of buckets. We will go through the key results from this litterature and provide an alternative proof of the unbiasedness of the difference-in-means estimator of the ATE, to build intuition. These results will also serve as a foundation for investigating the properties under restricted random sampling of buckets as imposed by programs of exclusive experiments.

Let the population be split into $B$ equally sized clusters (buckets) of size $N_B$ such that $\sum_{k=1}^K N_B = N_B*B = N.$ Let $\mathbf{S}$ denote a sample (set) of $N_S$ units. Under random sampling of units, there are $\tbinom{N}{N_S}$ ways to draw a sample. Let $\mathcal{S}$ denote the set of all possible samples $\mathbf{S}$ under random sampling of units and denote the cardinality of $\mathcal{S}$ by card($\mathcal{S}$). For all sample sizes $N_S$ such that $N_B \mod N_S=0$, let $\mathcal{S}_B$ denote the set of all possible samples of size $N_S$ under random sampling of buckets, $\mathcal{S}_B$, where the cardinality of this set is given by card($\mathcal{S}_B$)$=\tbinom{B}{\frac{N_SB}{N}}$.
 Let $\mathbf{W}=(W_1,...,W_i,...,W_{N_S})$ be a treatment indicator vector, where element $W_i$ takes the value 0 if unit $i$ is in control and 1 if unit $i$ is in treatment. For simplicity, assume that the sample, once drawn, is split into two equal sized groups ('treatment' and 'control') of size $N_s/2$. Let $\mathcal{W}$ be the set of all possible treatment vectors under random treatment assignments such that $\text{card}(\mathcal{W})=\tbinom{N_S}{N_S/2}$.

In \cite{horvitz1952}, the authors show that the Horvitz-Thompson estimator of the difference-in-means is an unbiased estimator of the population ATE under random sampling of clusters (and more general sampling designs). The Horvitz-Thompson estimator of the population means under treatment or control are given by
\begin{align*}
\hat{\bar{Y}}_{\pi_w}^w=\frac{N_S}{N \frac{N_S}{2}}\sum_{i\in S_s} \frac{Y_i(W=w)}{\pi_i},
\end{align*}
 where $\pi_i$ is the probability of including unit i in the sample. 
Further more, for inference regarding the population ATE, the authors proposed the Horvitz-Thompson version of the t-statistic given by
\begin{equation}\label{eq:hor-t}
\tilde{t}=\frac{\hat{\bar{Y}}_{\pi_1}^1-\hat{\bar{Y}}_{\pi_0}^0}{\sqrt{\text{var}(\hat{\bar{Y}}_{\pi_1}^1-\hat{\bar{Y}}_{\pi_0}^0})}.
\end{equation}
It can be shown that with equally sizes buckets, the ordinary difference-in-means estimator is a special case of the Horvitz-Thompson estimator of the population ATE \citep{horvitz1952}. This has several implication for the properties of the estimator. 
\begin{theorem}\label{thm:unbiased}
Under random sampling of equally sized buckets, with random treatment assignment into two equally sized groups, the sample difference-in-means estimator is an unbiased estimator of the ATE. I.e, 
\begin{equation*}
E[\widehat{ATE}_t] = ATE_t,
\end{equation*}
where expectation is taken over the design space of random samples and treatment allocations.
\end{theorem}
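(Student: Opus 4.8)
\section*{Proof proposal}

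The plan is to prove the result by evaluating $E[\widehat{ATE}_t]$ directly, decomposing it over the two independent sources of randomness in the design space -- the random sample of buckets and the random split of that sample into treatment and control -- and then exploiting the symmetry of each.

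First I would rewrite the estimator as a linear functional of the fixed potential outcomes. For each unit $i=1,\dots,N$ let $T_i$ be the indicator that equals $1$ when $i\in\mathbf{S}$ and $W_i=1$, and $0$ otherwise; define $C_i$ analogously for the control arm ($i\in\mathbf{S}$, $W_i=0$). Then
\[
\widehat{ATE}_t = \frac{2}{N_S}\left( \sum_{i=1}^N T_i\, Y^t_i(1) - \sum_{i=1}^N C_i\, Y^t_i(0) \right).
\]
Because the buckets are equally sized and $N_B \mid N_S$, drawing a sample of $N_S$ units is the same as drawing a simple random sample of $m = N_S/N_B$ of the $B$ buckets, so $N_S$ is non-random and the weight $2/N_S$ is a constant. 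Linearity of expectation then gives
\[
E[\widehat{ATE}_t] = \frac{2}{N_S}\left( \sum_{i=1}^N E[T_i]\, Y^t_i(1) - \sum_{i=1}^N E[C_i]\, Y^t_i(0) \right).
\]

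Next I would compute $E[T_i]$ (and, by the identical argument, $E[C_i]$) with the tower property, conditioning on the event $\{i\in\mathbf{S}\}$: $E[T_i] = \Pr(i\in\mathbf{S})\,\Pr(W_i=1 \mid i\in\mathbf{S})$. By the exchangeability of simple random sampling of equally sized buckets, every unit belongs to the sampled buckets with the same probability $\Pr(i\in\mathbf{S}) = m/B = N_S/N$. Since the treatment assignment splits the $N_S$ sampled units uniformly at random into two halves of size $N_S/2$, and does so independently of which buckets were drawn, $\Pr(W_i=1\mid i\in\mathbf{S}) = (N_S/2)/N_S = 1/2$. Hence $E[T_i]=E[C_i]=N_S/(2N)$ for every $i$, and substituting back,
\[
E[\widehat{ATE}_t] = \frac{2}{N_S}\cdot\frac{N_S}{2N}\left( \sum_{i=1}^N Y^t_i(1) - \sum_{i=1}^N Y^t_i(0) \right) = ATE_t.
\]

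The substantive content is concentrated in two bookkeeping checks rather than in any hard estimate: that equal bucket sizes together with $N_B \mid N_S$ make the realized sample size deterministic -- so the $2/N_S$ factor is constant and the difference-in-means coincides with the Horvitz-Thompson estimator with $\pi_i \equiv N_S/N$ -- and that treatment is assigned on the unit level independently of the bucket labels, so that the conditional treatment probability is exactly $1/2$ and carries no interaction with the cluster structure. The same conclusion can be reached combinatorially, by writing the expectation as the double average over $\mathcal{S}_B$ and $\mathcal{W}$ and counting, for each unit, the fraction $\tbinom{B-1}{m-1}/\tbinom{B}{m}=m/B$ of bucket samples that contain it and the fraction $\tbinom{N_S-1}{N_S/2-1}/\tbinom{N_S}{N_S/2}=1/2$ of treatment vectors that place it in treatment; I expect the symmetry argument above to be the cleaner one to present.
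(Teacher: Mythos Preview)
Your proposal is correct and takes essentially the same approach as the paper: both arguments reduce to showing that every unit has inclusion probability $N_S/N$ under equal-sized bucket sampling and conditional treatment probability $1/2$ under the random split, then invoke linearity. The paper carries this out via the explicit combinatorial enumeration over $\mathcal{S}_B\times\mathcal{W}$ that you sketch in your final paragraph (computing the multiplicity $\xi_B=\tfrac{\text{card}(\mathcal{W})}{2}\tbinom{B-1}{N_S/N_B-1}$ and simplifying), whereas your indicator-variable and tower-property presentation is simply a cleaner packaging of the identical symmetry.
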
 
\begin{proof}
See Appendix \ref{app:unrestricted}.
\end{proof}
Theorem \ref{thm:unbiased} is important as it implies that the difference-in-means estimator in all experiments run outside of programs of exclusive experiments are not biased by the bucket structure. 

Even though the Horovitz-Thompson estimator for the difference-in-mean reduces to the ordinary difference-in-means under equally sized buckets, the variance estimator used for inference needs special attention. The variance of the difference-in-means estimator becomes more intricate under bucket-sampling. This follows from the fact that for any given sample, the covariance between the treatment group means, $\text{cov}(\bar{Y}_s^1  ,\bar{Y}_s^0)$ might not be zero, due to the cluster structure --  which would imply that
$\text{var}(\bar{Y}^1 - \bar{Y}^0)=\text{var}(\bar{Y}^1 ) + \text{var}(\bar{Y}^0) +2\text{cov}(\bar{Y}^1  ,\bar{Y}^0)\neq \text{var}(\bar{Y}^1 ) + \text{var}(\bar{Y}^0).$
\cite{brakel2005} derives the variance of $\text{var}(\bar{Y}_s^1 - \bar{Y}_s^0)$ under complex sampling designs and make the remarkable finding that the variance is only a function of the so-called first order inclusion probabilities \citep{horvitz1952}. That is, even though cluster sampling has a well known effect on the efficiency of the sample mean estimator of the population mean (see, e.g., \cite{sukhatme1984sampling}, \cite{KumarPradhan2007}), the efficiency of the contrasts implied by difference-in-means sample estimators are in most cases not. This implies that under cluster sampling with equal sized cluster, the design-based Horvitz-Thompson variance estimator reduces approximately to the standard variance of the Welch's t-test. This further implies that under equally sized clusters (also called self-weighted sampling designs), the $\tilde{t}$-statistic given in Equation \ref{eq:hor-t} reduces approximately to the t-statistic \citep{vandenBrakel1998}. This means that the effect of the cluster sampling on the sampling distribution of the difference-in-means statistic vanishes when the inclusion probability is equal for all units, and treatment assignment is random in the given sample. This finding is in line with earlier observations by \cite{kish1974}.

It is easy to validate the claims of \cite{brakel2005} using Monte Carlo simulations. All simulations can be replicated using the Julia code in the supplementary files. The following is a small simulation, to build intuition. Outcome data are generated under the null hypothesis of $ATE=0$ according to 
\begin{align}\label{eq:sim:dgpsamp}
Y_i(W=0) &= b*Z_i + X_i
\end{align} 
where $Z, X \sim N_2(\mathbf{1}_2,\mathbf{I}_2)$. The parameters of the simulations are given in Table \ref{tab:sim_sampdist}. For each replication, a population of size 10000 observations is generated according to Equation \ref{eq:sim:dgpsamp}. From each population, 100 independent samples of size 1000 are drawn. In each sample the difference-in-means estimate and the t-statistic is calculated for 100 random treatment assignments.   
\begin{table}[h!]\centering
\begin{tabular}{rc}
Parameter & Values\\
\hline
Number of replications & 100\\
Population size ($N$) & 10000\\
Sample size ($N_S$) & 1000\\
Number of buckets ($B$) & 20\\
Bucket size ($N_B$) & 500\\
Number of random samples from each Population & 100\\
Number of random treatment assignments for each sample & 100\\
Sampling strategies & Random Units, Random clusters/buckets\\
\hline
\end{tabular}
\caption{Parameters of the Monte Carlo Simulation to illustrate the sampling distributions of the difference-in-means estimator and the t-statistic under random sampling of buckets and units, respectively.  }\label{tab:sim_sampdist}
\end{table}
Figure \ref{fig:sim_samdist} displays the results pooled across all populations, samples, and treatment assignments in terms of density plots. Clearly, the t-statistic has the same distribution under random sampling of buckets as under random sampling of units (the lines are exactly on top of each other), even though the sampling distributions of the difference-in-means estimator differs between the sampling strategies. This is exactly in line with \cite{brakel2005} and \cite{kish1974}. Note that, in this example, the sampling variation under random sampling of clusters is smaller than under random sampling of units. This is not a general result, but is a function of the intra-class correlation imposed but the data generating process. Since the intraclass correlation is generally not known and technically difficult to estimate with large populations, we want to have a inference strategy that applies regardless. 
\begin{figure}[h!]\centering
\caption{Density distribution plots for Monte Carlo simulations to illustrate the sampling distributions of the difference-in-means estimator and the t-statistic under random sampling of buckets and units, respectively.}\label{fig:sim_samdist}
\includegraphics[scale=1.2, trim={5.2cm 16cm 4.5cm 4.5cm},page=5, clip=TRUE]{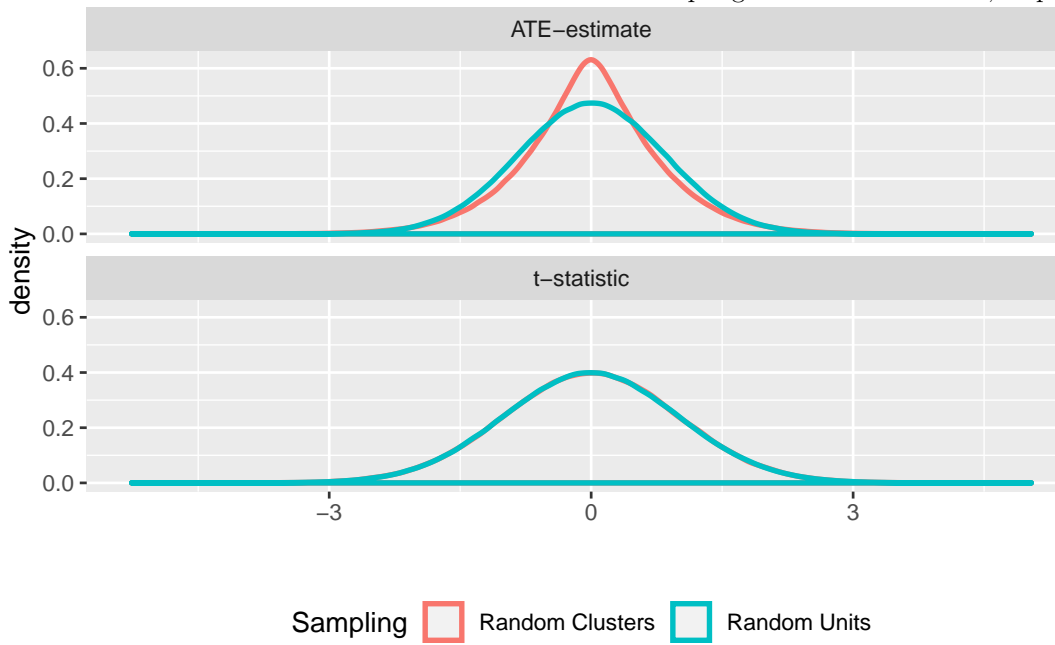}
\end{figure}
In summary, under random sampling of equal sized buckets with random treatment assignment, the inference to the population ATE is equivalent to the inference random sampling of units. This is of critical importance, as it implies that the inference for all experiments that are run outside of exclusivity programs are unaffected by the sampling of buckets. The remainder of this paper will focus on experiments run in programs of exclusive experiments.

\subsection{Restricted random sampling of buckets}
When we run programs of exclusive (non-overlapping) experiments we do not draw random samples from the population of buckets, but from the pool of buckets available at time point $t$. That is, let $\mathcal{B}$ be the set of buckets in the population with $\text{card}(\mathcal{B})=B$. Denote the set buckets available at time point t by $\mathcal{\tilde{B}}_t$, where $\mathcal{\tilde{B}}_t\subseteq \mathcal{B}$.
The problem with randomly sampling buckets from $\mathcal{\tilde{B}}_t$ rather than $\mathcal{B}$ is that the probability of including the units in the bucket in the complement of $\mathcal{\tilde{B}}_t$ is zero. This implies that we can no longer rely on the results of \cite{horvitz1952} and \cite{brakel2005}, since the unbiasedness of the Horvitz-Thompson estimators and the related variance estimators relies on non-zero inclusion probabilities for all units.

Before looking closer at the dependency structures caused by exclusive experiments that in turn imposes restrictions on the random sampling -- we start by noting that this is not an issue related to bucket reuse per se. Although the probability of certain overlaps between consecutive experiments is affected by the bucket structure for small number of buckets (see Section \ref{sec:number_of_buckets}), the dependencies are mainly a consequence of the exclusivity of the experiments. If experiments are run exclusively to each other, the sampling is restricted regardless of the sampling strategy itself. If random sampling of units is used, the random sampling is restricted from $\tbinom{N}{N_S}$ possible samples to $\tbinom{N_{\mathcal{\tilde{B}}}}{N_S}$, where $N_{\mathcal{\tilde{B}}}$ is the number of units in $\mathcal{\tilde{B}}$. Under random sampling of buckets, the corresponding reduction is from $\tbinom{B}{\frac{N_S}{N_B}}$ to $\tbinom{\text{card}(\mathcal{\tilde{B}})}{\frac{N_S}{N_B}}$ possible samples. In a program of exclusive experiments, this restriction is not random with relation the potential outcomes, and therefore imposes bias regardless of sampling unit.


A natural question at this point is: \textit{How} random the sample is under restricted random sampling of buckets? This a sensible question, since the sampling -- although restricted -- has elements of randomness. 
Under unrestricted random sampling of buckets there is no dependency between any samples. In a program of exclusive experiments there is dependency between the samples. In this section we investigate over how long time that dependency reaches, and show that the dependency is a function of how the program is run in terms of amount, sizes, and lengths of the experiments.

Let $\mathbf{B}_t$ be a $B\times 1$ vector where the element $b$ is given by
\begin{align*}
B^t_b&=\begin{cases}
1 & \text{if bucket }b \in \tilde{\mathcal{B}}_t\\
0 & \text{if bucket }b \not\in \tilde{\mathcal{B}}_t\\
\end{cases}.
\end{align*}
To avoid continuous time issues, assume that a the state of a bucket is fixed at a given time point. This assumption is not critical since the number of discrete time periods can be increased arbitrarily to make this true for any setting. 
If the subset $\mathcal{\tilde{B}}_t$ is randomly drawn from $\mathcal{B}$, it holds that 
$\mathbf{B}_t\indep \mathbf{B}_{t-1},\mathbf{B}_{t-2},... ,$ where $\indep$ denotes independence in the sense of \cite{Dawid1979}.
However, in a program of exclusive experiments it is generally the case that
$\mathbf{B}\not\indep \mathbf{B}_{t-1},\mathbf{B}_{t-2},...,$. This is simply a formal way of stating the obvious – the buckets available at time $t$ is not independent of the history of the program of exclusive experiments, and thereby not completely random in the sense that each bucket has the same probability to be sampled at time $t$ given the history of the program up until that time point. 
Without loss of generality, $\tilde{\mathcal{B}}_t$ at time point $t$ can be written as 
\begin{align*}
\tilde{\mathcal{B}}_t&=\Psi(\tilde{\mathcal{B}}_{t-1}; \epsilon, \phi, \theta),
\end{align*}
where, $\Psi$ is some function, $\theta$ is the sampling mechanism of available buckets into a sample, $\epsilon$ governs the probability that a bucket is in an experiment that ends at any given time point, and $\phi$ governs the probability that an experiments starts at any given time point.  
Both $\epsilon$ and $\phi$ are functions of the distribution of lengths, sizes of experiment and amount of experimentation, i.e.
$\epsilon =f(\mathbf{L},\mathbf{ N}, \mathbf{T})$ and $\phi =g(\mathbf{L},\mathbf{ N}, \mathbf{T})$, where $g$ and $h$ are some possibly stochastic functions, $\mathbf{L}$ is the distribution of lengths of experiments in the program, $\mathbf{N}$ is the distribution of sample sizes of experiments in the program, and $\mathbf{T}$ is the average proportion of the population experimented on at any given time point in the program. The set $\mathcal{\tilde{B}}_t$ is in most cases, even under restricted sampling, a partly stochastic function of the state in the previous time period due to the randomness imposed by $\theta$, $\phi$ and $\epsilon$. The sampling mechanism $\theta$ imposes stochastic changes to $\mathcal{\tilde{B}}_t$ over time  by construction. However, we note that in practice also $\phi$ and $\epsilon$ are 'stochastic' in the sense that $\tilde{\mathcal{B}}_t)$ is not strictly determined based on the current state of the exclusive experimentation program. In a large organization like Spotify, even within one experimentation program there are usually several independent teams that experiment with somewhat different designs, and the timing in terms of start and stop is not coordinated. In summary, the state of the program at time point $t$ (in terms of $\tilde{\mathcal{B}}_t$ is not a deterministic function of the history of the program up until that time point -- both by design and by organization.  

We can express the set of currently available buckets as a composite function of the previous sets, i.e.,
\begin{align*}
\tilde{\mathcal{B}}_t&=\Psi(\tilde{\mathcal{B}}_{t-1}; \epsilon, \phi, \theta)\\
\tilde{\mathcal{B}}_t&=\Psi(; \theta, \epsilon, \phi) \circ \Psi(\tilde{\mathcal{B}}_{t-2}; \theta,\epsilon, \phi) \\
\tilde{\mathcal{B}}_t&=\Psi(; \theta,\epsilon, \phi) \circ\Psi(; \theta,\epsilon, \phi) \circ \Psi( \tilde{\mathcal{B}}_{t-3}; \theta,\epsilon, \phi) \\
&\vdots \\
\tilde{\mathcal{B}}_t&=\Psi(; \theta,\epsilon, \phi) \circ\dots\circ \Psi(\tilde{\mathcal{B}}_{1}; \theta, \epsilon, \phi),
\end{align*}
where by construction $\mathcal{\tilde{B}}_1=\mathcal{B}$. In each time step, there is randomness injected by the random sampling.
It is easy to construe programs of exclusive experiments run in such a way that
\begin{align}\label{eq:indep_hist}
\mathcal{\tilde{B}}_t \indep \mathcal{\tilde{B}}_{t-\delta}, \mathcal{\tilde{B}}_{t-\delta -1},...,\mathcal{\tilde{B}}_{1},
\end{align}
for some integer $\delta>0$.
For example, imagine that the amount of the population used for experimentation is small on average and the length of experiments short. In this case, most buckets are available for sampling most of the time and it wouldn't take much time between two time points, $t-\delta$ and $t$, until the dependency between what buckets are available at these time points diminished. That is, the probability of a bucket $b$ being available for sampling at time point $t$, is independent of if the bucket was available or not at time point $t-\delta$. Even in situations where the the amount of experiments is large, the experiments are long, and the sizes of the experiments are large – the stochastic elements that are part of the sampling process will over time decrease the dependency between the sampling at time $t-\delta$ and the sampling at time point $t$, for a large enough $\delta$. \\

\begin{condition}\label{cond_indep}
Let $\delta_{\indep}>0$ be an integer indicating number of time periods
\begin{equation*}
\delta_{\indep}: \tilde{\mathcal{B}}_t \indep \tilde{\mathcal{B}}_{t-\delta_{\indep}}, \tilde{\mathcal{B}}_{t-\delta_{\indep}-1},...,\tilde{\mathcal{B}}_{1} \,\, \forall \,\, t=\delta_{\indep}+1,\delta_{\indep}+2,....
\end{equation*}

\end{condition}
For a $\delta$ fulfilling Condition \ref{cond_indep}, i.e., $\delta_{\indep}$, our sample of buckets at time $t$ is random with respect to the history of the program of exclusive experiments up until time point $t-\delta_{\indep}$. In other words, the sample is not random with respect to the last $\delta_{\indep}$ time points before we drew our sample, but random with respect to everything that happened before that.

To connect Condition \ref{cond_indep} to the bias imposed by the restricted sampling, define the slightly odd difference-in-means estimator 
\begin{equation}
\widehat{ATE^{\tilde{\mathcal{B}_t}}_{t-\delta_{\indep}}}=\frac{1}{N_S/2}\left(\sum_{i: W_i=1, i\in\mathbf{S}, \mathbf{S}\in\mathcal{S}_{\tilde{\mathcal{B}}_t }} Y^{t-\delta_{\indep}}_i(1) - \sum_{i: W_i=0, i\in\mathbf{S}, \mathbf{S}\in\mathcal{S}_{\tilde{\mathcal{B}}_t }} Y^{t-\delta_{\indep}}_i(0)    \right).
\end{equation} 
This is the estimate within a random sample from the available subset of buckets at time $t$, using the potential outcomes from time period $t-\delta_{\indep}$, under random treatment assignment.
\begin{lemma}\label{lemma:indep_delta}
For a $\delta_{\indep}$ fulfilling Condition \ref{cond_indep}, the difference-in-means estimator $\widehat{ATE^{\tilde{\mathcal{B}}_t}_{t-\delta_{\indep}}}$ is an unbiased estimator of  $ATE_{t-\delta_{\indep}}$, i.e., 
\begin{equation*}
E[\widehat{ATE^{\tilde{\mathcal{B}}_t}_{t-\delta_{\indep}}}] = ATE_{t-\delta_{\indep}}
\end{equation*}
\end{lemma}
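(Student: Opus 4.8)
The plan is to establish the stronger, conditional statement that $\widehat{ATE^{\tilde{\mathcal{B}}_t}_{t-\delta_{\indep}}}$ is unbiased for $ATE_{t-\delta_{\indep}}$ given the entire history of the program up to and including time $t-\delta_{\indep}$ -- denote that history $\mathcal{H}_{t-\delta_{\indep}}$ -- and then to average out $\mathcal{H}_{t-\delta_{\indep}}$. Once $\mathcal{H}_{t-\delta_{\indep}}$ is fixed, the potential outcomes $Y^{t-\delta_{\indep}}_i(\cdot)$, and hence $ATE_{t-\delta_{\indep}}$, are fixed. I would then apply the law of total expectation in three nested layers: (i) over the treatment assignment and the random sampling of buckets out of $\tilde{\mathcal{B}}_t$; (ii) over the choice of the available set $\tilde{\mathcal{B}}_t$ given $\mathcal{H}_{t-\delta_{\indep}}$; and (iii) over $\mathcal{H}_{t-\delta_{\indep}}$ itself.

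For layer (i), conditioning jointly on $(\mathcal{H}_{t-\delta_{\indep}},\tilde{\mathcal{B}}_t)$ leaves a fixed, equally sized family of buckets $\tilde{\mathcal{B}}_t$ from which the experiment sample is drawn uniformly, with a fresh uniform split of the sampled units into two halves, independently of the now-fixed time-$(t-\delta_{\indep})$ potential outcomes. This is exactly the situation covered by Theorem \ref{thm:unbiased}, read with the ``population'' taken to be the units contained in $\tilde{\mathcal{B}}_t$; so the inner conditional expectation equals the sub-population average treatment effect, $ATE_{t-\delta_{\indep}}(\tilde{\mathcal{B}}_t):=\big(N_B|\tilde{\mathcal{B}}_t|\big)^{-1}\sum_{b\in\tilde{\mathcal{B}}_t}\sum_{i\in b}\big(Y^{t-\delta_{\indep}}_i(1)-Y^{t-\delta_{\indep}}_i(0)\big)$.

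It then remains, for layer (ii), to show $E\big[ATE_{t-\delta_{\indep}}(\tilde{\mathcal{B}}_t)\mid\mathcal{H}_{t-\delta_{\indep}}\big]=ATE_{t-\delta_{\indep}}$. Here I would invoke Condition \ref{cond_indep}: since $\tilde{\mathcal{B}}_t\indep\tilde{\mathcal{B}}_{t-\delta_{\indep}},\dots,\tilde{\mathcal{B}}_1$, the conditional law of $\tilde{\mathcal{B}}_t$ given $\mathcal{H}_{t-\delta_{\indep}}$ coincides with its unconditional law and in particular carries no information about the time-$(t-\delta_{\indep})$ potential outcomes. Combining this with the label-symmetry of the bucketing -- the buckets come from a uniform hash and no experiment in the program is aimed at specific buckets, so the law of $\tilde{\mathcal{B}}_t$ is exchangeable in the bucket indices -- one gets that, conditional on $|\tilde{\mathcal{B}}_t|=m$, the set $\tilde{\mathcal{B}}_t$ is a uniform random $m$-subset of $\mathcal{B}$. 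The elementary bucket-level averaging that underlies Theorem \ref{thm:unbiased} (unbiasedness of a simple random sample mean, which works precisely because every bucket has the common size $N_B$) then gives $E\big[ATE_{t-\delta_{\indep}}(\tilde{\mathcal{B}}_t)\mid\mathcal{H}_{t-\delta_{\indep}},|\tilde{\mathcal{B}}_t|=m\big]=ATE_{t-\delta_{\indep}}$ for each admissible $m$; averaging over the distribution of $|\tilde{\mathcal{B}}_t|$, and then over $\mathcal{H}_{t-\delta_{\indep}}$, finishes the argument.

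The step I expect to be the real obstacle is layer (ii). Condition \ref{cond_indep} only supplies probabilistic independence of $\tilde{\mathcal{B}}_t$ from the history, whereas what the conclusion actually needs is that, relative to the fixed potential outcomes at time $t-\delta_{\indep}$, the available set behaves like a simple random sample of buckets from the whole population. Independence removes the channel by which the history (and with it those potential outcomes) could inform which buckets happen to be free, but on its own it does not rule out the sampling machinery persistently favouring or avoiding certain fixed buckets; excluding that possibility is what forces the appeal to the symmetry of the hashing scheme. Pinning down the interplay of these two ingredients, together with the bookkeeping for the random sample size $|\tilde{\mathcal{B}}_t|$, is where the care is needed; after that, the reduction to Theorem \ref{thm:unbiased} is routine.
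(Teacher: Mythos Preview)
Your proposal is correct and follows essentially the same three-layer decomposition as the paper's proof (random available subset $\tilde{\mathcal{B}}_t$, random sample of buckets from that subset, random treatment assignment within the sample); the paper carries this out by explicit combinatorial enumeration while you use the tower of conditional expectations, but the content is identical. The obstacle you flag in layer (ii) is well spotted: the paper simply asserts that, relative to the time-$(t-\delta_{\indep})$ potential outcomes, ``each subset is equally probable'' and proceeds with a uniform average over all $\binom{B}{\mathrm{card}(\tilde{\mathcal{B}}_t)}$ subsets, without separating the role of Condition~\ref{cond_indep} from the bucket-label exchangeability of the hashing mechanism, so your treatment of this point is in fact more careful than the paper's own.
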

\begin{proof}
See Appendix \ref{app:restricted}.
\end{proof}
Lemma \ref{lemma:indep_delta} is not very helpful in itself, as we cannot generally observe $Y^{t-\delta_{\indep}}(W)$ at time point $t$. However, this lemma has interesting implications for the bias imposed by the restrictions on the subset of available buckets.
Denote the average treatment effect in  a subset of buckets $\mathcal{B}'$, where $\mathcal{B}'\subseteq \mathcal{B}$, 
\begin{align*}
ATE_t^{\mathcal{B}_t'} = \frac{1}{\text{card}(\mathcal{B}_t')}\left( \sum_{i \in \mathcal{B}'_t} Y^t_i(1)-\sum_{i \in \mathcal{B}'_t} Y^t_i(0) \right).
\end{align*}
\begin{definition}\label{def:ate_tilde}
Define the difference between the ATE's for a subset of buckets $\mathcal{B}_t'$ at two distinct time points $t$ and $t'$, where $t'<t$, as
\begin{equation*}
\widetilde{ATE}^{\mathcal{B}_t'}_{t':t}  \equiv ATE^{\mathcal{B}_t'}_t - ATE^{\mathcal{B}_t'}_{t-t'},
\end{equation*}
where if $\mathcal{B}_t'=\mathcal{B} $ we drop the  superscript, i.e., $\widetilde{ATE}_{t':t}  = ATE_{t} - ATE_{t-t'}$.
\end{definition}
Definition \ref{def:ate_tilde} is a simple decomposition of the average treatment effect for any subset of buckets at any time point into the ATE at a previous time point and a remainder. 
Combining these results yields the following theorem. 
\begin{theorem}\label{thm:bias}
For a $\delta_{\indep}$ fulfilling condition \ref{cond_indep}, the bias in the difference in means estimator caused by sampling from the restricted set of of buckets $\tilde{\mathcal{B}}_t$ instead of $\mathcal{B}$ is given by
\begin{align*}
ATE- E[\widehat{ATE^{\tilde{\mathcal{B}}_t}_t}] &= \widetilde{ATE}_{t-\delta_{\indep}:t} - \widetilde{ATE}^{ \tilde{\mathcal{B}}_t}_{t-\delta_{\indep}:t}.
\end{align*} 
\end{theorem}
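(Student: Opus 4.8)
The plan is to separate the randomness by iterated expectation: condition first on the realized set of currently available buckets $\tilde{\mathcal{B}}_t$, evaluate the inner expectation over the random bucket sample and the random treatment assignment, and only afterwards average over the program history that produced $\tilde{\mathcal{B}}_t$. Concretely I would write
\[
E[\widehat{ATE^{\tilde{\mathcal{B}}_t}_t}] = E_{\tilde{\mathcal{B}}_t}\big[ E[\widehat{ATE^{\tilde{\mathcal{B}}_t}_t} \mid \tilde{\mathcal{B}}_t] \big].
\]
Conditional on $\tilde{\mathcal{B}}_t$, the units in the available buckets form a fixed sub-population that is itself a union of equally sized buckets, the experiment sample is a uniformly random collection of those buckets, and treatment is assigned at random into two equal halves of the sample. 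This is precisely the configuration of Theorem \ref{thm:unbiased} with the sub-population $\tilde{\mathcal{B}}_t$ playing the role of the population, so the inner expectation equals the sub-population ATE at time $t$, namely $ATE^{\tilde{\mathcal{B}}_t}_t$. Hence $E[\widehat{ATE^{\tilde{\mathcal{B}}_t}_t}] = E_{\tilde{\mathcal{B}}_t}[ATE^{\tilde{\mathcal{B}}_t}_t]$, and the bias question reduces to comparing $ATE_t$ with the expectation of this sub-population effect.

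Next I would apply Definition \ref{def:ate_tilde} to split $ATE^{\tilde{\mathcal{B}}_t}_t = ATE^{\tilde{\mathcal{B}}_t}_{t-\delta_{\indep}} + \widetilde{ATE}^{\tilde{\mathcal{B}}_t}_{t-\delta_{\indep}:t}$ and take the expectation over $\tilde{\mathcal{B}}_t$ term by term. For the first term, the same conditioning argument as above, but carried out with the time-$(t-\delta_{\indep})$ potential outcomes, gives $E[\widehat{ATE^{\tilde{\mathcal{B}}_t}_{t-\delta_{\indep}}}] = E_{\tilde{\mathcal{B}}_t}[ATE^{\tilde{\mathcal{B}}_t}_{t-\delta_{\indep}}]$, and Lemma \ref{lemma:indep_delta} identifies this quantity with $ATE_{t-\delta_{\indep}}$ because Condition \ref{cond_indep} makes $\tilde{\mathcal{B}}_t$ random with respect to everything that shaped the population at time $t-\delta_{\indep}$. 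Substituting and rearranging, $ATE_t - E[\widehat{ATE^{\tilde{\mathcal{B}}_t}_t}] = (ATE_t - ATE_{t-\delta_{\indep}}) - E_{\tilde{\mathcal{B}}_t}[\widetilde{ATE}^{\tilde{\mathcal{B}}_t}_{t-\delta_{\indep}:t}]$; recognising $ATE_t - ATE_{t-\delta_{\indep}} = \widetilde{ATE}_{t-\delta_{\indep}:t}$ (the $\mathcal{B}_t'=\mathcal{B}$ case of Definition \ref{def:ate_tilde}) yields the claimed identity, with $\widetilde{ATE}^{\tilde{\mathcal{B}}_t}_{t-\delta_{\indep}:t}$ on the right read either conditionally on the realised available set or under its expectation over $\tilde{\mathcal{B}}_t$.

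The crux, as I see it, is the conditional reduction to Theorem \ref{thm:unbiased}: I must argue that once $\tilde{\mathcal{B}}_t$ is fixed the law of the bucket sample drawn from it is genuinely uniform over unions of the prescribed number of those buckets and is independent of the (fixed) potential outcomes, and that treatment assignment is independent of sampling — in short, that conditioning on $\tilde{\mathcal{B}}_t$ strips away all the history-induced dependence and leaves a clean cluster-sampling experiment embedded in the sub-population (this also quietly requires $|\tilde{\mathcal{B}}_t|$ to be large enough for the sample to fit). A second delicate point, and the reason the statement is phrased with $\delta_{\indep}$ at all, is that invoking Lemma \ref{lemma:indep_delta} is legitimate only when a $\delta_{\indep}$ satisfying Condition \ref{cond_indep} exists, so I would fix $\delta_{\indep}$ to satisfy Condition \ref{cond_indep} from the outset and flag that the theorem inherits exactly that hypothesis. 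Everything past these two points is bookkeeping with Definition \ref{def:ate_tilde}.
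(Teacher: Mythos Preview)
Your proposal is correct and follows essentially the same route as the paper: decompose both $ATE_t$ and the expected estimator via Definition~\ref{def:ate_tilde}, then invoke Lemma~\ref{lemma:indep_delta} to cancel the time-$(t-\delta_{\indep})$ terms. Your treatment is in fact more explicit than the paper's --- you spell out the iterated-expectation step and justify the inner conditional expectation by appeal to Theorem~\ref{thm:unbiased}, and you flag that $\widetilde{ATE}^{\tilde{\mathcal{B}}_t}_{t-\delta_{\indep}:t}$ on the right must be read in expectation over $\tilde{\mathcal{B}}_t$; the paper's proof leaves both of these points implicit.
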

\begin{proof}
See Appendix \ref{app:restricted}.
\end{proof}
Theorem \ref{thm:bias} says that in a program of exclusive experiments where there exists a $\delta$ fulfilling Condition \ref{cond_indep}, the bias imposed by the restricted sampling is limited to the difference between the evolution of the ATE in the population contra the bucket of available buckets during the last $\delta_{\indep}$ time points. This gives important insights for running programs of exclusive experiments in practice. If we can empirically estimate $\delta_{\indep}$ for a program, it implies that we only have to address the possible biases caused during the last $\delta_{\indep}$ time points. All effects from experiment before time point $t-\delta_{\indep}$ are uniformly spread across experiments at time point $t$ in expectation. The only remaining bias comes from the divergence between the potential outcomes in $\tilde{\mathcal{B}}_t$ and $\mathcal{B}$ at time $t$. This also implies that if at any time point $t$ it holds that $\tilde{\mathcal{B}}_t=\mathcal{B}$ or the ATE is fixed over time, the difference-in-means estimator is an unbiased estimator of $ATE_t$, as expected. In later sections we present a way to estimate $\delta_{\indep}$ for a given program. One trivial lower bound on $\delta_{\indep}$ can be formulated as the following theorem. 
\begin{theorem}
In a program of exclusive experiments, a $\delta_{\indep}$ fulfilling Condition \ref{cond_indep} is always larger than the length of the longest experiment ran in the program.
\end{theorem}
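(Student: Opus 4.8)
The plan is to argue by contradiction. Writing $L_{\max}$ for the length of the longest experiment in the program, I will show that no integer $\delta$ with $1 \le \delta \le L_{\max}$ can satisfy Condition \ref{cond_indep}, by exhibiting a time point $t$ at which $\tilde{\mathcal{B}}_t$ is statistically dependent on $\tilde{\mathcal{B}}_{t-\delta}$. The whole argument hinges on one elementary fact: the buckets allocated to an experiment stay blocked for the experiment's entire duration, so the (random) identity of that bucket set couples the availability vectors $\mathbf{B}_\tau$ across the full time span of the experiment.

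Concretely, let $E$ be a longest experiment, let $\mathcal{E} \subseteq \mathcal{B}$ be the random set of buckets it uses, and let $t_1$ and $t_2$ be the first and last time periods during which those buckets are blocked, so that $t_2 - t_1 = L_{\max}$ under the paper's length convention (equivalently, $E$ is active on $L_{\max}+1$ consecutive periods; the precise off-by-one here is bookkeeping). By construction $\mathcal{E} \cap \tilde{\mathcal{B}}_{t_1} = \mathcal{E} \cap \tilde{\mathcal{B}}_{t_2} = \emptyset$ almost surely, whereas $\mathcal{E}$ is itself a genuinely random, non-degenerate subset (it is produced by the sampling mechanism $\theta$ and, in any non-trivial program, is neither empty nor equal to the whole pool it was drawn from). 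The key step is to deduce from this that $\tilde{\mathcal{B}}_{t_2} \not\indep \tilde{\mathcal{B}}_{t_1}$. I would do this by marginalising to a single bucket $b^*$ that has strictly positive probability of lying in $\mathcal{E}$ and checking that $\mathrm{Cov}(B^{t_1}_{b^*}, B^{t_2}_{b^*}) \neq 0$: the common event $\{b^* \in \mathcal{E}\}$ forces $B^{t_1}_{b^*} = B^{t_2}_{b^*} = 0$ together, contributing a term to the joint law that is not annihilated by the remaining randomness of the program. (In the trivial one-experiment program one in fact has $\tilde{\mathcal{B}}_{t_1} = \tilde{\mathcal{B}}_{t_2}$ identically, which makes the dependence immediate.)

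With this in hand, suppose for contradiction that some integer $\delta$ with $1 \le \delta \le L_{\max}$ fulfilled Condition \ref{cond_indep}. Put $t = t_2$. Since $t_1 \ge 1$ we have $t_2 \ge L_{\max} + 1 \ge \delta + 1$, so $t$ lies in the range over which the condition is required, and $t - \delta = t_2 - \delta \ge t_2 - L_{\max} = t_1 \ge 1$, so $t_1 \in \{1, \dots, t-\delta\}$. Condition \ref{cond_indep} would then give $\tilde{\mathcal{B}}_t \indep (\tilde{\mathcal{B}}_{t-\delta}, \dots, \tilde{\mathcal{B}}_1)$, hence in particular $\tilde{\mathcal{B}}_{t_2} \indep \tilde{\mathcal{B}}_{t_1}$, contradicting the key step. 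Therefore every $\delta$ fulfilling Condition \ref{cond_indep} satisfies $\delta > L_{\max}$.

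I expect the main obstacle to be the key step, namely establishing $\tilde{\mathcal{B}}_{t_2} \not\indep \tilde{\mathcal{B}}_{t_1}$ with full rigour in the general case: other experiments that overlap $E$ in time also shape $\tilde{\mathcal{B}}_{t_1}$ and $\tilde{\mathcal{B}}_{t_2}$, and one must argue that the dependence created by the shared blocked set $\mathcal{E}$ genuinely survives marginalisation over all the other sources of randomness rather than cancelling. The clean route is to condition on everything except the draw $\theta$ that produced $\mathcal{E}$ and then carry out the single-bucket covariance computation under that conditioning; a secondary, purely notational point is to fix the convention under which an experiment occupying the periods $t_1,\dots,t_2$ has length $t_2 - t_1$.
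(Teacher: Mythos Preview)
Your proposal is correct and takes essentially the same approach as the paper: both arguments observe that the buckets allocated to a longest experiment of length $D$ remain blocked throughout its run, so the availability sets at the start and end of that experiment cannot be independent, precluding any $\delta_{\indep} \le D$. The paper's own proof is a single informal sentence (``the set of available buckets will deterministically depend on this experiment until it stops''), so your version is considerably more careful --- in particular, your honest flagging of the marginalisation issue (that other overlapping experiments also shape the availability sets) identifies a genuine technical point the paper simply asserts away.
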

\begin{proof}
If an experiment of length $D$ starts at time $t$, the set of available buckets will deterministically depend on this experiment until it stops which implies that a $\delta_{\indep}$ fulfilling Condition \ref{cond_indep} can trivially not be smaller than D, which in turns proves the theorem. $\blacksquare$
\end{proof}

\subsubsection{Estimating the dependency length $\delta_{\indep}$}
It is certainly possibly to find mathematical conditions on $\phi$, $\theta$, and $\epsilon$ (including $\mathbf{T}$, $\mathbf{N}$, $\mathbf{L}$) under which there exists finite $\delta$ that fulfills Condition \ref{cond_indep}. However, from an applied perspective we are more interested in finding $\delta_{\indep}$ for our existing exclusive experimentation programs, so that we can manually keep track of the effects in the experiments the last $\delta_{\indep}$ time periods when starting a new experiment to avoid large biases.
Importantly, since $\mathbf{B}$ is a vector of Bernoulli random variables, it holds that $\text{cor}(\mathbf{B}_t, \mathbf{B}_{t-\delta}) \Leftrightarrow \mathbf{B}_t \indep \mathbf{B}_{t-\delta} \Leftrightarrow \tilde{\mathcal{B}}_t \indep \tilde{\mathcal{B}}_{t-\delta} $ \citep{Dai2013}, which implies that it is sufficient to study the correlation to establish independence in practice. In other words, we can learn about $\delta_{\indep}$ in any program of exclusive experiment by studying the correlation between $\mathbf{B}_t$ and $\mathbf{B}_{t-{\delta}}$ for various values of $\delta$. We propose using the estimator
\begin{equation}
\widehat{\delta_{\indep}} = \min(\delta) :  \frac{1}{T^*-\delta+1} \sum_{t=1}^{T-\delta+1} cor^*(\mathbf{B}_{t},\mathbf{B}_{t+\delta})=0,
\end{equation} 
where
\begin{align*}
cor^*(\mathbf{B}_{t},\mathbf{B}_{t+\delta})=\begin{cases}
cor(\mathbf{B}_{t},\mathbf{B}_{t+\delta}) & \text{if } var(\mathbf{B}_{t})\neq 0 \cap var(\mathbf{B}_{t+\delta})\neq 0\\
0 & \text{if }  \mathbf{B}_{t}=\mathbf{}  \text{ or } \mathbf{B}_{t+\delta}=\mathbf{1}\\
1 & \text{if }   \mathbf{B}_{t}= \mathbf{B}_{t+\delta}=\mathbf{0}\\
NA & \text{otherwise}
\end{cases},
\end{align*}
and $T*$ is the number of periods where $cor^*(\mathbf{B}_{t},\mathbf{B}_{t+\delta}) \neq NA$, and $cor(\mathbf{B}_{t},\mathbf{B}_{t+\delta})$ is the sample Pearson correlation coefficient.
The alternative 'correlation' estimator handles the special cases that makes the sample correlation estimator ill defined, but that are logically reasonable in this setting. If any of the vectors are all 1's the correlation is zero since all buckets where available at that time. If all are zero, no buckets are available and the correlation is exactly 1. The estimator $cor^*(\mathbf{B}_{t},\mathbf{B}_{t+\delta})$ is used in the Monte Carlo simulations in the following section.

\section{Practical aspects for running programs of exclusive experiments}\label{sec:practical}

In this section focus on practical aspects of running programs of exclusive experiments. First we discuss the utility of stopping and reshuffling units into new buckets. Second,  we present Monte Carlo simulations of the dependency structure of a experimentation program at Spotify.   

\subsection{Stopping or no stopping}
A question that naturally rises when considering bucket reuse is if there is any need, or at least utility, of stopping all experiments and reshuffle users into new buckets with some cadence. This is an appealing idea: if the program is stopped and the units reshuffled into new buckets using a random salt, the dependency between the state before and after the reshuffle is completely removed. However, the results in this paper indicates that the utility depends on if the bucket reuse is used for non-exclusive experiments or programs of exclusive experiments. 

Given the results in Section \ref{sec:BR_properties_CR}, it is clear that if the random sampling of buckets is not restricted – the bucket structure has no effect on the validity of the treatment effect estimators in the experiments and stopping is never necessary. These results should not be conflated with the effect of the bucket structure on estimators of other estimands such as the population mean. As pointed out by \cite{sukhatme1984sampling, KumarPradhan2007} and others, the efficiency of many estimators under cluster sampling is a function of the intra-class correlation structure imposed by the clusters (buckets). The intra-class correlation is expected to increase after the experimentation program is started since units is a bucket will have more in common on average than users in two different buckets. However, units in a bucket do not share all experiences. Even though units within the same buckets are bounded to go through the same sequence of experiments by construct, they will not go through the same treatments as the treatment assignment is completely random within samples. This is an important difference between buckets and the traditional 'clusters' in the statistical literature. In the statistical literature, a cluster is often a geographically or contextually defined group, that always, regardless of any experimentation, share some environment which makes them more similar than units between two clusters. In the experimentation platform at Spotify, the cluster are formed randomly in the initialization of the experimentation program which implies that the only thing the units in the bucket has in common is the experiences they go through as a consequence of the experimentation program. As a consequence, all units in the control group of any experiments at a given time point are sharing the same experience, which in turn implies that approximately half of the users in the population share the same experience all the time \footnote{Assuming even splits into treatment and control which is often selected for efficiency reasons.}. This should effectively decrease the heterogeneity among the buckets as opposed to traditional clusters. 

Under restricted random sampling of buckets, as imposed by a program of exclusive experiments, if we 'need' to stop or not depends on what $\delta$ that fulfils Condition \ref{cond_indep}, and what is considered a reasonable length of dependency between experiments in terms of time.  Stopping and reshuffling can make Condition \ref{cond_indep} true for $\delta=1$ at any time point, but the benefit only applies to the first round of experiments started after the stop. Noteworthy, it is sufficient to stop all experiments in the program to achieve this effect – reshuffling is not needed. Once all experiments are stopped,  there is no dependency between the availability for sampling at that time point and any time point before that. To substantially benefit from stopping it needs to be done more often than the corresponding $\delta_{\indep}$ for the given program. We see three plausible ways of handling the dependency between experiments in programs of exclusive experiments in practice, listed below.
\begin{enumerate}
\item Estimate $\delta_{\indep}$ and monitor/adjust for dependencies and possible carry-over bias within that time frame.
\item Tune the amount, sizes, and lengths of experiments ($\mathbf{T}$, $\mathbf{N}$,  and $\mathbf{L}$) to decrease $\delta_{\indep}$ to an acceptably small value.
\item Systematically stop all experimentation with the cadence of the desired $\delta$ to enforce an upper bound on the dependency in terms of time.
\end{enumerate}
At Spotify we have opted for the first alternative since development flexibility is crucial to our organization. Given that our $\delta_{\indep}$ seems to be quite small (see simulations in the following section), it is infeasible to stop more often than that. We often run experiments over several months, and do not want to restrict the starting and stopping of such experiments to be bounded by this.

Another argument for stopping and reshuffling, that has come up in previous discussions, is that the bucket structure could increase the carry-over bias in itself, by increasing the heterogeneity in the population over time. 
Recall the Section \ref{sec:coordexp} and the concepts of paths. By sampling buckets instead of users, we further limit the maximum possible number of paths in our program from one path per unit ($N$) to one path per bucket ($B$). Indeed, there are also several users within each path implying that each path will make up a larger part of the sample for a certain experiment than would be the case with N paths. We note that the $B$ paths are 'randomly' selected in the sense that there is no systematic way in which the bucket structure imposes selecting paths that are associated with deviant treatment effects. Moreover, for programs where $\delta_{\indep}$ is reasonably small, the paths that occurs are in fact random over time, which makes sequence of unfortunate events as likely as they are empirically -- no more no less. That is, the consequence of sampling buckets rather than units is that we randomly draw $B$ paths, instead of randomly draw N paths from the incredibly large set of possible paths\footnote{Assuming a program with many experiments over time.}. The impact in future experiments of randomly hitting a path associated with an exceptionally strong interaction effects from the sequence of experiments is larger since more users are in the path and they will appear in the same sample. However, since the treatment assignment will randomly spread these users into the treatment groups, the increased impact should be small in most conceivable cases.

In summary, assuming that the number of buckets is sufficiently large in relation to the smallest experiments (Section \ref{sec:number_of_buckets}) and if the goal is to estimate average treatment effects: We see no reason from a statistical point of view to reshuffle users into new buckets. Stopping all experimentation always resets the dependency between experiments which decreases the bias in experiments started close in time to the restart, which implies that stopping is a way of tuning the dependency.

\subsection{Simulations}\label{sec:sim_corr}
\FloatBarrier

 We can learn about $\delta_{\indep}$ by running Monte Carlo simulations. All simulations can be replicated using the Julia code in the supplementary files. Six settings are considered, displayed in Table \ref{tab:sim_corr}. For each setting 100 random starting points are considered. For each starting point, the average number of experiments needed to obtain $\mathbf{T}$ amount of experiments are started. For these experiments a random sample of sizes are drawn, and the corresponding amount of buckets are then sampled to each experiment. The lengths for these experiments a random sample of lengths are drawn, and it is noted for how long the buckets will be occupied. Based on each starting point 10000 simulations of 90 days are performed. That is, at each time point the simulation checks the proportion of buckets that are assigned to an experiment. If the proportion is smaller than $\mathbf{T}$, it starts as many experiments that are needed to achieve $\mathbf{T}$ experimentation on average with random lengths and sizes. 
 
 The ATE per bucket at time point t is generated by drawing a random sample of $B$ effects from a Normal distribution with mean 3 and variance 2.  The distributions of lengths considered in the simulations are given by $L_1=$ (1, 2, 3, 7, 7, 8, 8, 8, 12, 13, 14, 14, 14, 14, 15, 21, 21, 21, 30 ) and $L_2=(21, 28)$. The distributions of sizes of experiments considered in the simulation are given by $N_1=(2\%, 2\%,2\%, 2\%, 5\%, 5\%, 8\%, 9\% , 10\%, 10\% )$ and $N_2=(20\%, 25\% )$. The distributions of lengths and sizes, $L_1$ and $N_1$, are approximately those of the empirical experimentation in the Home-screen experiment program at Spotify during the spring of 2020. The size and length of each started experiments is drawn uniformly from $\mathbf{N}$ and $\mathbf{L}$, respectively. Each day it is noted which buckets are available for sampling, which buckets are sampled, and the $\widehat{ATE^{\tilde{\mathcal{B}}_t}_{t-\delta_{\indep}}}$ estimate for that sample is recorded. That is, in addition to $\mathbf{B}_t$ we also collect the $B \times 1$ vector $\mathbf{A}_t$ where the elements
\begin{align*}
A^t_b&=\begin{cases}
1 & \text{if bucket }b \text{ is sampled at time }t\\
0 & \text{if bucket }b \text{ is not sampled at time }t
\end{cases},
\end{align*}
to enable visualization of how the correlation between the samples depend on the availability of buckets.  
\begin{table}[hbt!]\centering
\begin{tabular}{rcccccc}
&\multicolumn{6}{c}{Setting}\\
Parameter & 1 &  2&3&4&5&6\\
\hline
Experiment length distribution ($\mathbf{L}$) & $L1$& $L1$& $L_2$& $L1$& $L_2$& $L_2$\\
Experiment size distribution ($\mathbf{N}$) & $N_1$& $N_1$& $N_1$& $N_2$ & $N_2$ & $N_2$ \\
Avg. prop. of population experimented on($\mathbf{T}$) & 90\% & 50\% & 90\% & 90\% & 90\% & 50\% \\
Number of Buckets & \multicolumn{6}{c}{10000}\\
Number of Days & \multicolumn{6}{c}{90}\\
Number of Starting points &  \multicolumn{6}{c}{50}\\
Number of Replications per starting point &  \multicolumn{6}{c}{10000}\\
\hline
\end{tabular}
\caption{Parameters of the Monte Carlo Simulation. For parameters with lists of values, a value is randomly (uniformly) drawn when a new experiment is started.}\label{tab:sim_corr}
\end{table}
The evaluation metrics for the simulation are given by
\begin{align*}
\text{Available Buckets Correlation} &= cor*(\mathbf{B}_1,\mathbf{B}_\delta)\\
\text{Sampling Correlation} &= cor*(\mathbf{A}_1,\mathbf{A}_\delta)\\
\text{ATE1 Bias} &= ATE_1 - \widehat{ATE^{\mathcal{B}_\delta}_{
1}}.
\end{align*}
The ATE1 Bias measure is included here to confirm Lemma \ref{lemma:indep_delta} which in turn is key to the proof of Theorem \ref{thm:bias}. 

Figure \ref{fig:sim_corr1} and \ref{fig:sim_corr2} display the three outcome metrics across starting points and replications for various values of $\delta$ for Setting 1-6. There are many things to learn from these simulations. Lemma \ref{lemma:indep_delta} is confirmed by all simulations as the ATE1 bias goes to zero as the correlation between the available buckets goes to zero. For setting 1 and 2 (Figure \ref{fig:sim_corr1}), which resembles the empirical experimentation at Spotify, the correlation in sampling and availability is quite small after a short time period\footnote{The peaks are artefacts form that the $\delta$'s are multiplicatives of the lengths of the experiments.}. As expected, availability is positively correlated right after the starting point, whereas the sampling correlation is negative. The buckets that were sampled at time point 1 cannot be sampled again until the initial experiments are stopped, whereas the buckets that were available on time point 1 are available after that unless sampled into a new experiment.  For Setting 1 a plausible $\widehat{\delta_{\indep}}$ seems to be around 6-8 weeks. However, then the amount of experimentation is lowered to 50\% on average, this ATE1 bias is close to zero already after 4 weeks. This is expected, when the amount of experimentation is low, there is always many buckets to randomly sample from which implies a lot of spreading of old experiment samples into new samples. For settings 3 and 4 (Figure \ref{fig:sim_corr2}), the length and size distribution are altered, respectively, to give information about the relative importance of the lengths and the sizes of experiments with respect to $\delta_{\indep}$. In setting three, the lengths of the experiments are all 3 or 4 weeks. This has impact on $\widehat{\delta_{\indep}}$ which seems to be more than 12 weeks. In Setting 4, where the lengths are according to $L_1$ but sizes are 9\% and 10\%, the dependency goes down quickly. That means that going from $L_1$ to $L_2$ seems more problematic than going from $N_1$ to $N_2$, in terms of bucket availability and sampling dependencies.
In settings 5 and 6, the experiments are both long and large. Perhaps counter intuitively, the dependency is not stronger than in settings 4 and 5, and the size of the biases is quite small. However, this is reasonable since each experiment makes up a quite a large part of the population. If each experiment was of size 100\% of the population, the dependency between two experiments would be zero as 'all' experiments has to be stopped until a new can be started. It is clear that under settings 5 and 6, the correlation pattern is very predictable, as might be expected given the hard restrictions on when an experiment can start and when it will end given when it starts. Also with $L_2$ and $N_2$ the dependencies and bias can be effectively decreased by lowering the traffic, as is done in Setting 6.
\begin{figure}[hbt!]\centering
\caption{The ATE1 bias, bucket availability correlation, and bucket sampling correlation plotted as a function of $\delta$, for settings 1-2. The parameters for each setting are given in Table \ref{tab:sim_corr}. }\label{fig:sim_corr1}
\includegraphics[scale=1.2, trim={5cm 16.1cm 3cm 4.5cm},page=6, clip=TRUE]{plots.pdf}
\end{figure}
\begin{figure}[hbt!]\centering
\caption{The ATE1 bias, bucket availability correlation, and bucket sampling correlation plotted as a function of $\delta$, for settings 3-6. The parameters for each setting are given in Table \ref{tab:sim_corr}. }\label{fig:sim_corr2}
\includegraphics[scale=1.2, trim={5cm 8.5cm 3cm 4.5cm},page=7, clip=TRUE]{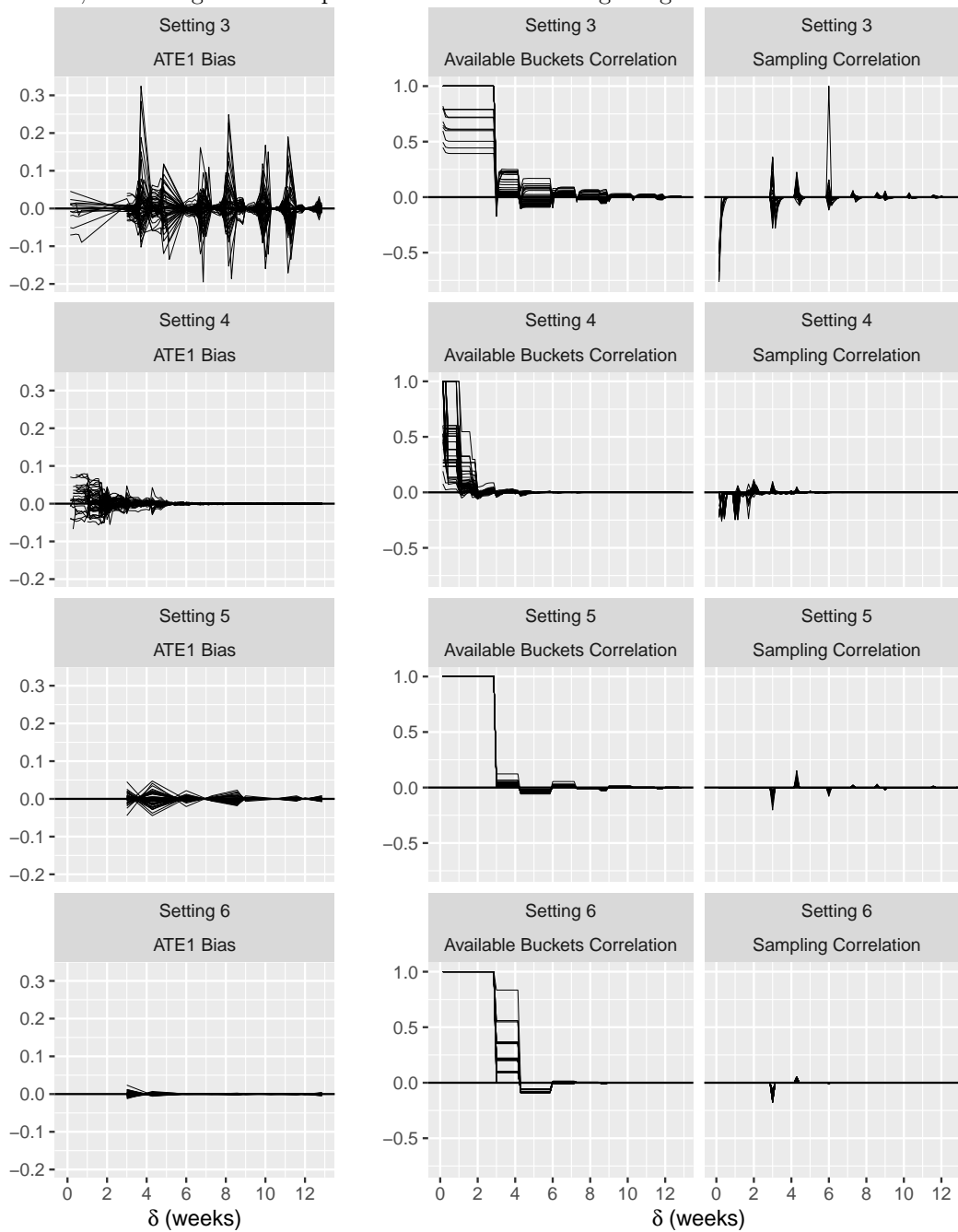}
\end{figure}
Appendix \ref{app:simulations} contains results from the same six settings but with 100 buckets instead of 10000. These results indicate that the bias decreases when the number of buckets increase. This is expected given the data generating process used here. The heterogeneity is fixed, implying that when the number of buckets increase there are more buckets with similar ATE's, which yields less heterogeneity across samples containing many buckets and therefore smaller bias on average.

In summary, the amount of traffic and length of the experiments seem more important than experiment sizes when it comes to decreasing $\delta_{\indep}$. If the program is run with a mix of different sizes below 10\% and of various lengths, experimenting continuously on up to 90\% of the population doesn't cause severe dependencies. In our empirical setting, a reasonable $\delta_{\indep}$ seems to be 8 weeks even with high amount of continuous experimentation.

\FloatBarrier
\section{Discussion and Conclusions}\label{sec:disc}
In this paper we have investigated the properties of Bucket Reuse as a coordinated sampling strategy for experiment programs, both exclusive and non-exclusive. The key results are the following. For non-exclusive experiments sampling buckets is equivalent with traditional cluster sampling. This implies that bucket reuse with a sufficient number of buckets does not affect the inference to the population average treatment effect (ATE) for non-exclusive experiments. Moreover, it is shown that programs of exclusive experiments introduce bias, regardless of if sampling is performed on the unit level or the bucket level. It is shown this bias is a function of the history of the experimentation program, but in most cases only the recent history. That is, the sampling at a given time point $t$ is independent of the sampling at a time point $t-\delta$ for a sufficiently large $\delta$. This implies that the bias is limited to the effects of the experimentation program during the last $\delta$ time points leading up to the sampling. A way of estimating the length of this dependency is proposed, and several ways of handling this issue are discussed. These results should be highly relevant for any coordinated experimentation program regardless of coordination and sampling implementation. The results regarding the bias imposed by restricted sampling applies to sampling on unit and bucket level. These results are for example applicable to experiments run in the system proposed by \cite{Tang2010}.

Based on the results of this paper, the experimentation platform at Spotify will migrate to coordination using Bucket Reuse with 1M (1,000,000) buckets. That is, all users are hashed into 1M buckets, and these buckets are used for all experiments in all experimentation programs, exclusive and non-exclusive. Although we establish results that indicate that smaller numbers of buckets can have sufficiently nice statistical properties, it should be clear that the larger the number of buckets, the better. Even though the inference for the average treatment effect is unaffected by the bucket sampling, it is well known that the effect of cluster sampling on other estimands decreases when the number of buckets increases \citep{KumarPradhan2007}. That is, imposing a bucket structure is not desirable from a statistical perspective, but a technical necessity. The choice of 1M buckets was made because it is close to the largest number of buckets we can have while still keeping the selected buckets within an executable script stored in a database, without having to resort to blob storage. 1M bucket implies that randomization is not severely restricted by the bucket structure for experiments of sizes down to 0.1\% of the population, which is well under our smallest sample sizes in practice. In programs of exclusive experiments, we will also monitor the correlation of availability empirically for different experimentation programs and provide information and recommendation to the experiments accordingly.  

\clearpage
\pagebreak
\bibliography{library}

\appendix
\section{Appendix}
All theorems are repeated here for convenience.
\subsection{Complete random sampling of Buckets}\label{app:unrestricted}

\begin{theorem}
Under random sampling of equally sized buckets, with random treatment assignment into two equally sized groups, the sample difference-in-means estimator is an unbiased estimator of the ATE. I.e, 
\begin{equation*}
E[\widehat{ATE}] = ATE,
\end{equation*}
where expectation is taken over the design space of random samples and treatment allocations.
\end{theorem}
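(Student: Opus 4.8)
The plan is to establish unbiasedness by iterated expectation, splitting the design expectation into an inner expectation over the treatment vector $\mathbf{W}$ and an outer expectation over the sampled set of buckets $\mathbf{S}$. This split is legitimate because, as described in Section~\ref{sec:coordexp}, the treatment assignment is drawn uniformly from $\mathcal{W}$ within the realized sample, independently of which buckets were drawn; hence $E[\widehat{ATE}_t]=E_{\mathbf{S}}\big[E_{\mathbf{W}}[\widehat{ATE}_t\mid \mathbf{S}]\big]$, where $\mathbf{S}$ ranges over $\mathcal{S}_B$ and $\mathbf{W}$ over $\mathcal{W}$.

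First I would compute the inner expectation. Fix a sample $\mathbf{S}$ of $N_S$ units. Since $\mathbf{W}$ is uniform on the set of $0/1$ vectors with exactly $N_S/2$ ones, each unit $i\in\mathbf{S}$ has $\Pr(W_i=1)=\Pr(W_i=0)=\tfrac12$. Applying linearity of expectation to the two sums in $\widehat{ATE}_t$ gives $E_{\mathbf{W}}[\sum_{i\in\mathbf{S},\,W_i=1}Y_i^t(1)\mid\mathbf{S}]=\tfrac12\sum_{i\in\mathbf{S}}Y_i^t(1)$ and likewise $\tfrac12\sum_{i\in\mathbf{S}}Y_i^t(0)$ for the control sum, so the factor $2/N_S$ cancels the $\tfrac12$ and the inner expectation equals the in-sample ATE, $ATE_t^{\mathbf{S}}:=\tfrac1{N_S}\sum_{i\in\mathbf{S}}\big(Y_i^t(1)-Y_i^t(0)\big)$. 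This is just the classical Neyman argument and uses nothing about buckets.

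Next I would take the outer expectation and show $E_{\mathbf{S}}[ATE_t^{\mathbf{S}}]=ATE_t$. Write $\tau_i:=Y_i^t(1)-Y_i^t(0)$, let $T_b:=\sum_{i\in b}\tau_i$ be the bucket-level total of the individual treatment effects, and let $I_b$ equal $1$ if bucket $b$ is in the sample and $0$ otherwise, so that $N_S\,ATE_t^{\mathbf{S}}=\sum_{b=1}^{B}I_b\,T_b$. Under random sampling of $N_S/N_B$ buckets out of the $B$ equally sized buckets, every bucket is included with the same probability $E[I_b]=(N_S/N_B)/B=N_S/N$; hence $E_{\mathbf{S}}[N_S\,ATE_t^{\mathbf{S}}]=(N_S/N)\sum_{b}T_b=(N_S/N)\sum_{i=1}^{N}\tau_i=N_S\,ATE_t$, i.e.\ $E_{\mathbf{S}}[ATE_t^{\mathbf{S}}]=ATE_t$. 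Combining the two steps via the tower property yields $E[\widehat{ATE}_t]=ATE_t$. The structural fact doing the work is that equal bucket sizes make the unit-level inclusion probability constant and equal to $N_S/N$, exactly as under simple random sampling of units --- which is also why the difference-in-means estimator coincides with the Horvitz--Thompson estimator in this design. I do not anticipate a genuine obstacle: the only points requiring care are that the design is well defined only when $N_B\mid N_S$ (assumed, so $N_S/N_B$ is an integer) and that the independence of $\mathbf{W}$ and $\mathbf{S}$ must be invoked to justify iterating the expectation; both are guaranteed by the setup, and the rest is linearity of expectation.
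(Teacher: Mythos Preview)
Your proof is correct. It differs in presentation from the paper's argument, though both ultimately exploit the same symmetry. The paper proceeds by explicitly enumerating the full design space $\mathcal{S}_B\times\mathcal{W}$, then counts, via a combinatorial identity, the number of times each unit's potential outcome $Y_i(w)$ appears in the grand double sum: this count is $\xi_B=\tfrac{\text{card}(\mathcal{W})}{2}\binom{B-1}{N_S/N_B-1}$, and after dividing by $\text{card}(\mathcal{S}_B)\,\text{card}(\mathcal{W})\,N_S/2$ and using $\text{card}(\mathcal{S}_B)=\tfrac{B}{N_S/N_B}\binom{B-1}{N_S/N_B-1}$, everything collapses to $1/N$. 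You instead condition on the sample and apply the tower property, reducing the problem to two short linearity-of-expectation steps using only the marginal probabilities $\Pr(W_i=1\mid\mathbf{S})=\tfrac12$ and $\Pr(I_b=1)=N_S/N$. Your route is shorter and avoids the binomial bookkeeping; the paper's route is more explicit about the design-space average and makes the ``each unit appears equally often'' symmetry visible, which it later reuses in the proof of Lemma~\ref{lemma:indep_delta} where a third layer (random subset selection) is added. Either argument is complete for the theorem as stated.
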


\begin{proof}
Under random sampling of equally sized buckets, it follows that $\pi_i=\frac{N_S}{N}\,\,\forall\,\, i=1,...,N$, which implies that the Horvitz-Thompson estimator simplifies as
 \begin{align}\label{eq:horvits_mean}
\hat{\bar{Y}}_{\pi_w}^w=\frac{N_S}{N \frac{N_S}{2}} \frac{N}{N_S} \sum_{i\in S_s} Y_i(W=w)=\frac{1}{\frac{N_S}{2}}\sum_{i\in S_s} Y_i(W=w)=\bar{Y}^w,
\end{align}
which is simply the sample means of the groups. Unbiasedness follows directly from the results in \cite{horvitz1952}, but we give an alternative proof here to build intuition. 
We drop the superscript $t$ on the ATE to easy notation. Denote the difference-in-means estimator of the average treatment effect 
\begin{equation}
\widehat{ATE}= \frac{1}{N_s/2} \left( \sum_{i:i\in \mathbf{S}\cap W_i=1} Y_i(W_i) -\sum_{i:i\in \mathbf{S}\cap W_i=0} Y_i(W_i)       \right).
\end{equation}
Enumerate all possible sample under random sampling of buckets with $\mathbf{S}_s$ where $s=1,...,\text{card}(\mathcal{S}_B)$.  Moreover, enumerate all possible treatment assignments over random treatment assignments by $W^j$ where $j=1,...,\text{card}(\mathcal{W})$. This implies that we can write one single estimate as
\begin{equation}
\widehat{ATE}_{s,j}= \frac{1}{N_s/2} \left( \sum_{i:i\in \mathbf{S}_s\cap W^j_i=1} Y_i(W_i^j) -\sum_{i:i\in \mathbf{S}_s\cap W^j_i=0} Y_i(W_i^j)       \right).
\end{equation}
The expected value of $\widehat{ATE}_B$, where the subscript $B$ indicates random sampling of buckets from $\mathcal{B}$, is given by
\begin{align*}
E[\widehat{ATE}_B] &= \frac{1}{\text{card}(\mathcal{S}_B)} \sum_{s=1}^{\text{card}(\mathcal{S}_B)}\left(  
\frac{1}{\text{card}(\mathcal{W})} \sum_{j=1}^{\text{card}(\mathcal{W})} \widehat{ATE}_{s,j} 
\right)\\
&= \frac{1}{\text{card}(\mathcal{S}_B)} \frac{1}{\text{card}(\mathcal{W})}  \sum_{s=1}^{\text{card}(\mathcal{S}_B)}
\sum_{j=1}^{\text{card}(\mathcal{W})} \widehat{ATE}_{s,j} \\
&= \frac{1}{\text{card}(\mathcal{S}_B)} \frac{1}{\text{card}(\mathcal{W})} \frac{2}{N_S}  \left(  \sum_{s=1}^{\text{card}(\mathcal{S}_B)}
\sum_{j=1}^{\text{card}(\mathcal{W})} \sum_{i:i\in \mathbf{S}_s\cap W^j_i=1} Y_i(W_i^j) -\sum_{s=1}^{\text{card}(\mathcal{S}_B)}
\sum_{j=1}^{\text{card}(\mathcal{W})} \sum_{i:i\in \mathbf{S}_s\cap W^j_i=0} Y_i(W_i^j)       \right),
\end{align*}
where the last step follows from equally sized buckets. 
Due to the symmetry of the random sampling of buckets and the equal bucket size, each bucket (and thereby unit) will be in equally many samples. Moreover, in each sample, each unit is in the treatment and control groups equally many times, respectively, due to the mirror property of randomization distributions \citep{johansson2020, nordin2020}. This implies that we are simply adding and subtracting the value of each unit several times. It follows that,  
\begin{align*}
E[\widehat{ATE}_B] &=  \frac{1}{\text{card}(\mathcal{S}_B)} \frac{1}{\text{card}(\mathcal{W})} \frac{2}{N_S}  \left(  \sum_{s=1}^{\text{card}(\mathcal{S}_B)}
\sum_{j=1}^{\text{card}(\mathcal{W})} \sum_{i:i\in \mathbf{S}_s\cap W^j_i=1} Y_i(W_i^j) -\sum_{s=1}^{\text{card}(\mathcal{S}_B)}
\sum_{j=1}^{\text{card}(\mathcal{W})} \sum_{i:i\in \mathbf{S}_s\cap W^j_i=0} Y_i(W_i^j)       \right)\\
&= \frac{1}{\text{card}(\mathcal{S}_B)} \frac{1}{\text{card}(\mathcal{W})} \frac{2}{N_S}  \left(  \xi_B \sum_{i=1}^N Y_i(1) -\xi_B \sum_{i=1}^N Y_i(0)       \right).
\end{align*}
The number of time that each unit will be in the control and the treatment group across all possible samples and treatment assignments is given by
\begin{align*}
\xi_B &=  \frac{\text{card}(\mathcal{W})}{2} \times \binom{B-1}{\frac{N_S}{N_B}-1},
\end{align*}
which implies
\begin{align*}
E[\widehat{ATE}_B] &=  \frac{1}{\text{card}(\mathcal{S}_B)} \frac{\binom{K-1}{\frac{N_S}{N_B}-1}}{N_S}  \left(  \sum_{i=1}^N Y_i(1) -\sum_{i=1}^N Y_i(0)       \right).
\end{align*}
We note that
\begin{align*}
\text{card}(\mathcal{S}_B) = \frac{B}{\frac{N_S}{N_B}} \binom{B-1}{\frac{N_S}{N_B}-1}
\end{align*}
which gives the final expression
\begin{align*}
E[\widehat{ATE}_B] &=   \frac{N_S}{N_B B\binom{K-1}{\frac{N_S}{N_B}-1}} \frac{\binom{K-1}{\frac{N_S}{N_B}-1}}{N_S}  \left(  \sum_{i=1}^N Y_i(1) -\sum_{i=1}^N Y_i(0)       \right)\\
&= \frac{1} {N_B B} \left(  \sum_{i=1}^N Y_i(1) -\sum_{i=1}^N Y_i(0)       \right)\\
&= \frac{1} {N} \left(  \sum_{i=1}^N Y_i(1) -\sum_{i=1}^N Y_i(0)       \right)=ATE.
\end{align*}
Which is in line with \cite{horvitz1952} as expected.  $\blacksquare$

\end{proof}

\pagebreak

\subsection{Restricted random sampling of buckets}\label{app:restricted}

\begin{lemma}
For a $\delta_{\indep}$ fulfilling Condition \ref{cond_indep}, the the difference in means estimator $\widehat{ATE^{\tilde{\mathcal{B}}_t}_{t-\delta_{\indep}}}$ is an unbiased estimator of  $ATE_{t-\delta_{\indep}}$, i.e., 
\begin{equation}
E[\widehat{ATE^{\tilde{\mathcal{B}}_t}_{t-\delta_{\indep}}}] = ATE_{t-\delta_{\indep}}
\end{equation}

\end{lemma}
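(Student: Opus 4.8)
The plan is a condition-and-average argument. First I would fix an arbitrary realization of the available-bucket set $\tilde{\mathcal{B}}_t$. Conditional on it, the estimator $\widehat{ATE^{\tilde{\mathcal{B}}_t}_{t-\delta_{\indep}}}$ is exactly a two-group difference-in-means estimator built from a uniformly random sample of $N_S/N_B$ equally sized buckets drawn from the finite sub-population $\bigcup_{b\in\tilde{\mathcal{B}}_t} b$, together with a split into two equal halves that is, by construction, independent of which buckets were drawn; the only twist is that the outcomes entering the estimator are the potential outcomes at time $t-\delta_{\indep}$ rather than at the sampling time. Applying Theorem \ref{thm:unbiased} to this sub-population, with $Y^{t-\delta_{\indep}}$ in the role of the outcome, then gives
\begin{equation*}
E[\widehat{ATE^{\tilde{\mathcal{B}}_t}_{t-\delta_{\indep}}} \mid \tilde{\mathcal{B}}_t] = ATE^{\tilde{\mathcal{B}}_t}_{t-\delta_{\indep}},
\end{equation*}
the average time-$(t-\delta_{\indep})$ treatment effect over the units that happen to sit in the currently available buckets.

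Next I would take the outer expectation over the stochastic composition of $\phi$, $\epsilon$ and $\theta$ that generates $\tilde{\mathcal{B}}_t$, so that $E[\widehat{ATE^{\tilde{\mathcal{B}}_t}_{t-\delta_{\indep}}}] = E[ATE^{\tilde{\mathcal{B}}_t}_{t-\delta_{\indep}}]$, and then invoke Condition \ref{cond_indep}. The condition makes $\tilde{\mathcal{B}}_t$ independent of the entire program history up to time $t-\delta_{\indep}$, and therefore independent of the time-$(t-\delta_{\indep})$ potential outcomes, which can depend on that earlier history only through carry-over. Combined with the exchangeability of the buckets with respect to their labels -- inherited from the random salted hashing and from the fact that the start/stop and sampling mechanisms act symmetrically on bucket identities and never inspect outcomes -- this makes the sample of buckets at time $t$ behave, relative to the time-$(t-\delta_{\indep})$ outcomes, exactly like a fresh uniform sample of buckets from $\mathcal{B}$. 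Hence $E[ATE^{\tilde{\mathcal{B}}_t}_{t-\delta_{\indep}}]$ collapses to the average of the bucket-level effects over all $B$ buckets, i.e.\ to $ATE_{t-\delta_{\indep}}$, which is the claim.

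I expect the main obstacle to be making this last step airtight: one must show that $\tilde{\mathcal{B}}_t$ is independent of the bucket-level time-$(t-\delta_{\indep})$ effects jointly with its own cardinality, so that even the random number of available buckets carries no information about those effects -- otherwise the ratio-form average $|\tilde{\mathcal{B}}_t|^{-1}\sum_{b\in\tilde{\mathcal{B}}_t}(\cdot)$ need not be conditionally unbiased for the population average. This is exactly where Condition \ref{cond_indep} together with the symmetry of the hashing are indispensable; the remaining bookkeeping is the same counting already carried out in the proof of Theorem \ref{thm:unbiased}.
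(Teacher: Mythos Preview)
Your argument is correct and follows the same logical skeleton as the paper's proof: both use Condition~\ref{cond_indep} to conclude that $\tilde{\mathcal{B}}_t$ is independent of the time-$(t-\delta_{\indep})$ potential outcomes, and therefore behaves like a uniformly random subset of $\mathcal{B}$ with respect to those outcomes. The difference is one of packaging. You condition on $\tilde{\mathcal{B}}_t$, invoke Theorem~\ref{thm:unbiased} as a black box to obtain the sub-population ATE, and then take an outer expectation; the paper instead unrolls everything into a single triple sum over subsets $\times$ samples $\times$ treatment assignments and redoes the counting of Theorem~\ref{thm:unbiased} with one extra layer. Your modular version is shorter and makes the dependence on Theorem~\ref{thm:unbiased} explicit, while the paper's version is more self-contained but repeats work already done.

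Your flagged obstacle about the random cardinality of $\tilde{\mathcal{B}}_t$ is a real subtlety that the paper handles only implicitly: its enumeration runs over ``possible subsets of a given size,'' which amounts to conditioning on $\text{card}(\tilde{\mathcal{B}}_t)$ without saying so. The resolution in both cases is the same symmetry argument you sketch --- the program mechanisms act on bucket labels only, so conditional on any cardinality the subset is exchangeable over buckets --- and you are right that this is where Condition~\ref{cond_indep} and the hashing symmetry do the real work.
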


\begin{proof}
Since a program of exclusive experiment is expected to change user behaviour and therefore reactions to future changes, it generally holds that 
\begin{equation}
Y_t(0), Y_t(1) \not\indep \tilde{\mathcal{B}}_t.
\end{equation}
However, it is also the case that the only dependency between the potential outcomes and the set of available buckets is captured by the history of the available buckets. The samples are random from the set of available buckets -- if different subsets have different experiences that affects their behaviour such that the ATE changes, the dependency between the sample and these changes are completely described by the history of $\tilde{\mathcal{B}}_t$. It follows from Condition \ref{cond_indep} that
\begin{equation}
\tilde{\mathcal{B}}_t \indep \tilde{\mathcal{B}}_{t-\delta}, \tilde{\mathcal{B}}_{t-\delta-1},...,\tilde{\mathcal{B}}_{1} \Rightarrow Y_{t-\delta_{\indep}}(0), Y_{t-\delta_{\indep}}(1) \indep \tilde{\mathcal{B}}_t.
\end{equation} 
In other words, in relation to the potential outcomes at time $t-\delta_{\indep}$, the subset $\tilde{\mathcal{B}}_t$ is a random subset from $\mathcal{B}$. One way to understand this is that from a randomization perspective it is equivalent to 1. randomly selecting a subset at time $t$, randomly sample from that subset, and finally randomly assign the treatment, and, 2. semi-randomly selecting subsets in steps until the subset is independent from the starting set (at step $\delta_{\indep}$), and then sample and assign treatment randomly. The semi-random subsetting in $\delta_{\indep}$ steps is essentially an ineffective method for randomly drawing a subset. The important practical difference between 1 and 2 is that in a program of exclusive experiments, the potential outcomes are expected to change as a function of the steps of subsetting in 2. For this reason, performing 1 at time $t$ yields a subset that is independent of the outcome at time $t$ and all other time periods, whereas 2 yields a subset that is independent in relation to the potential outcomes only at time $t-\delta_{\indep}$ and before that. \\\\
Since $\tilde{\mathcal{B}}_t$ is a random subset of buckets in relation to the potential outcomes at time $t-\delta_{\indep}$ under Condition \ref{cond_indep}, it only remains to prove that the difference-in-means estimator is an unbiased estimator of $ATE_{t-\delta_{\indep}}$ under random 'sampling' of a subset, random sampling within the subset and random treatment assignment within the sample to prove Lemma \ref{lemma:indep_delta}.
Enumerate the possible subsets,$\tilde{\mathcal{B}}_t^b$, of a given size as  $b=1,...,\binom{B}{\text{card}(\tilde{\mathcal{B}}_t)}$ ,where each subset is equally probable in relation to the potential outcomes at time $t-\delta_{\indep}$. In each subset, enumerate all possible samples $\mathbf{S}_s$ as$s=1,...,\binom{\text{card}(\tilde{\mathcal{B}}_t)}{\frac{N_S}{N_B}}$.   This implies that, using results from the proof of Theorem \ref{thm:unbiased}
\begin{align*}
E[\widehat{ATE^{\tilde{\mathcal{B}}_t}_t}] &= \frac{1}{\binom{B}{\text{card}(\tilde{\mathcal{B}}_t)}} \sum_{b=1}^{\binom{B}{\text{card}(\tilde{\mathcal{B}}_t)}} \frac{1}{\text{card}(\mathcal{S}_{\tilde{\mathcal{B}}^b_t})} \frac{1}{\text{card}(\mathcal{W})} \frac{2}{N_S} \\& \left(  \sum_{s=1}^{\text{card}(\mathcal{S}_{\tilde{\mathcal{B}}^b_t})}
\sum_{j=1}^{\text{card}(\mathcal{W})} \sum_{i:i\in \mathbf{S}_s\cap W^j_i=1} Y_{i,t-\delta_{\indep}}(W_i^j) -\sum_{s=1}^{\text{card}(\mathcal{S}_B)}
\sum_{j=1}^{\text{card}(\mathcal{W})} \sum_{i:i\in \mathbf{S}_s\cap W^j_i=0} Y_{i, t-\delta_{\indep} }(W_i^j)       \right)\\
&= \frac{1}{\binom{B}{\text{card}(\tilde{\mathcal{B}}_t)}} \frac{1}{\text{card}(\mathcal{S}_{\tilde{\mathcal{B}}_t})} \frac{1}{\text{card}(\mathcal{W})} \frac{2}{N_S}  \\&\left( \sum_{b=1}^{\binom{B}{\text{card}(\tilde{\mathcal{B}}_t)}}  \sum_{s=1}^{\text{card}(\mathcal{S}_{\tilde{\mathcal{B}}^b_t})}
\sum_{j=1}^{\text{card}(\mathcal{W})} \sum_{i:i\in \mathbf{S}_s\cap W^j_i=1} Y_{i,t-\delta_{\indep}}(W_i^j) - \sum_{b=1}^{\binom{B}{\text{card}(\tilde{\mathcal{B}}_t)}} \sum_{s=1}^{\text{card}(\mathcal{S}_B)}
\sum_{j=1}^{\text{card}(\mathcal{W})} \sum_{i:i\in \mathbf{S}_s\cap W^j_i=0} Y_{i, t-\delta_{\indep} }(W_i^j)       \right)
\end{align*}
where, similar to the proof of Theorem \ref{thm:unbiased}, each unit is included in equally many subsets and samples, and within each sample each unit is included in the treatment group and the control group equally many times. It follows that 
\begin{align*}
E[\widehat{ATE^{\tilde{\mathcal{B}}_t}_t}] &= \frac{1}{\binom{B}{\text{card}(\tilde{\mathcal{B}}_t)}} \frac{1}{\text{card}(\mathcal{S}_{\tilde{\mathcal{B}}_t})} \frac{1}{\text{card}(\mathcal{W})} \frac{2}{N_S} \left( \tilde{\xi} \sum_{i=1}^N Y_{i, t-\delta_{\indep} }(1) -\tilde{\xi}\sum_{i=1}^N Y_{i, t-\delta_{\indep} }(0)       \right),
\end{align*}
where
\begin{equation}
\tilde{\xi} = \frac{\text{card}(\mathcal{W})}{2} \times \binom{\text{card}(\tilde{\mathcal{B}}_t)-1}{\frac{N_S}{N_B}-1}\times \binom{B-1}{\text{card}(\tilde{\mathcal{B}}_t)-1}.
\end{equation}
Note that
\begin{align*}
\text{card}(\mathcal{S}_{\tilde{\mathcal{B}}_t}) &= \frac{\text{card}(\tilde{\mathcal{B}}_t)}{\frac{N_S}{N_B}} \binom{\text{card}(\tilde{\mathcal{B}}_t)-1}{\frac{N_S}{N_B}-1}\\
\binom{B}{\text{card}(\tilde{\mathcal{B}}_t)}&= \frac{B}{\text{card}(\tilde{\mathcal{B}}_t)}\binom{B-1}{\text{card}(\tilde{\mathcal{B}}_t)-1}.
\end{align*}
Putting this together it follows that
\begin{align*}
E[\widehat{ATE^{\tilde{\mathcal{B}}_t}_t}] &= \frac{1}{\binom{B}{\text{card}(\tilde{\mathcal{B}}_t)}} \frac{1}{\text{card}(\mathcal{S}_{\tilde{\mathcal{B}}_t})} \frac{1}{\text{card}(\mathcal{W})} \frac{2}{N_S} \left(  \sum_{i=1}^N Y_{i, t-\delta_{\indep} }(1) -\sum_{i=1}^N Y_{i, t-\delta_{\indep} }(0)     \right)\\
&=\frac{1}{\frac{B}{\text{card}(\tilde{\mathcal{B}}_t)}} \frac{1}{\frac{\text{card}(\tilde{\mathcal{B}}_t)}{\frac{N_S}{N_B}} }  \frac{1}{N_S} \left(  \sum_{i=1}^N Y_{i, t-\delta_{\indep} }(1) -\sum_{i=1}^N Y_{i, t-\delta_{\indep} }(0)     \right)\\
&=\frac{1}{N_BB}  \left( \sum_{i=1}^N Y_{i, t-\delta_{\indep} }(1) -\sum_{i=1}^N Y_{i, t-\delta_{\indep} }(0)     \right)\\
&=\frac{1}{N} \left( \sum_{i=1}^N Y_{i, t-\delta_{\indep} }(1) -\sum_{i=1}^N Y_{i, t-\delta_{\indep} }(0)     \right)=ATE_{t-\delta_{\indep}}\,\, \blacksquare
\end{align*}

\end{proof}


\begin{theorem}
For a $\delta_{\indep}$ fulfilling condition \ref{cond_indep}, the bias in the difference in means estimator caused by sampling from the restricted set of the population of buckets given by $\tilde{\mathcal{B}}$ is given by
\begin{align*}
ATE- E[\widehat{ATE^{\tilde{\mathcal{B}}_t}_t}] &= \widetilde{ATE}_{t-\delta_{\indep}:t} - \widetilde{ATE}^{ \tilde{\mathcal{B}}_t}_{t-\delta_{\indep}:t}.
\end{align*} 
\end{theorem}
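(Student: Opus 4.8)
The plan is to reduce Theorem~\ref{thm:bias} to Lemma~\ref{lemma:indep_delta} together with the bucket-sampling unbiasedness argument of Theorem~\ref{thm:unbiased}, by exploiting linearity of the difference-in-means functional in the potential outcomes. First I would decompose each time-$t$ potential outcome as $Y^t_i(w) = Y^{t-\delta_{\indep}}_i(w) + \Delta^t_i(w)$ with $\Delta^t_i(w) \equiv Y^t_i(w) - Y^{t-\delta_{\indep}}_i(w)$. Because $\widehat{ATE^{\tilde{\mathcal{B}}_t}_{\cdot}}$ is, for any fixed sample and treatment vector, a linear functional of the outcome values of the sampled units, this yields the exact identity $\widehat{ATE^{\tilde{\mathcal{B}}_t}_t} = \widehat{ATE^{\tilde{\mathcal{B}}_t}_{t-\delta_{\indep}}} + \widehat{\Delta}$, where $\widehat{\Delta}$ is the difference-in-means estimator computed on the same drawn sample and the same treatment assignment but with the increments $\Delta^t_i$ in place of the $Y$'s. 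Taking expectations over the design space (restricted sampling of buckets, then random treatment allocation) splits the problem into two pieces.

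For the first piece, Lemma~\ref{lemma:indep_delta} applies verbatim: under Condition~\ref{cond_indep} the set $\tilde{\mathcal{B}}_t$ is a random subset of $\mathcal{B}$ with respect to the potential outcomes at time $t-\delta_{\indep}$, so $E[\widehat{ATE^{\tilde{\mathcal{B}}_t}_{t-\delta_{\indep}}}] = ATE_{t-\delta_{\indep}}$. For the second piece I would condition on the realized set $\tilde{\mathcal{B}}_t$ and use the tower property, $E[\widehat{\Delta}] = E\big[E[\widehat{\Delta}\mid\tilde{\mathcal{B}}_t]\big]$. Given $\tilde{\mathcal{B}}_t$, the sample is a simple random sample of buckets from $\tilde{\mathcal{B}}_t$ and the treatment vector is assigned at random within it, so the combinatorial argument used in the proof of Theorem~\ref{thm:unbiased}, now applied to the sub-population formed by the units in $\tilde{\mathcal{B}}_t$ rather than the whole population, gives $E[\widehat{\Delta}\mid\tilde{\mathcal{B}}_t] = \frac{1}{\text{card}(\tilde{\mathcal{B}}_t)N_B}\big(\sum_{i\in\tilde{\mathcal{B}}_t}\Delta^t_i(1) - \sum_{i\in\tilde{\mathcal{B}}_t}\Delta^t_i(0)\big) = ATE^{\tilde{\mathcal{B}}_t}_t - ATE^{\tilde{\mathcal{B}}_t}_{t-\delta_{\indep}}$, which by Definition~\ref{def:ate_tilde} is exactly $\widetilde{ATE}^{\tilde{\mathcal{B}}_t}_{t-\delta_{\indep}:t}$.

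Combining the two pieces gives $E[\widehat{ATE^{\tilde{\mathcal{B}}_t}_t}] = ATE_{t-\delta_{\indep}} + \widetilde{ATE}^{\tilde{\mathcal{B}}_t}_{t-\delta_{\indep}:t}$; subtracting this from $ATE \equiv ATE_t$ and using Definition~\ref{def:ate_tilde} to rewrite $ATE_t - ATE_{t-\delta_{\indep}}$ as $\widetilde{ATE}_{t-\delta_{\indep}:t}$ produces precisely the claimed expression $ATE - E[\widehat{ATE^{\tilde{\mathcal{B}}_t}_t}] = \widetilde{ATE}_{t-\delta_{\indep}:t} - \widetilde{ATE}^{\tilde{\mathcal{B}}_t}_{t-\delta_{\indep}:t}$.

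I expect the main obstacle to be the bookkeeping around the two different unbiasedness statements and what each one conditions on: the $\widehat{ATE^{\tilde{\mathcal{B}}_t}_{t-\delta_{\indep}}}$ term is unbiased for the \emph{population} ATE (precisely because the $t-\delta_{\indep}$ outcomes are independent of $\tilde{\mathcal{B}}_t$ by Condition~\ref{cond_indep}), whereas the $\widehat{\Delta}$ term is only unbiased \emph{conditionally on} $\tilde{\mathcal{B}}_t$ for the within-$\tilde{\mathcal{B}}_t$ increment and does \emph{not} collapse to a population quantity — that residual dependence on $\tilde{\mathcal{B}}_t$ is exactly what survives as the bias. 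A secondary point to state carefully is that the increments $\Delta^t_i$ need not be independent of $\tilde{\mathcal{B}}_t$, so one must not attempt to simplify $E[\widehat{\Delta}]$ beyond the conditional expectation, and the terms $\widetilde{ATE}^{\tilde{\mathcal{B}}_t}_{t-\delta_{\indep}:t}$ appearing in the statement are to be read as functions of the realized $\tilde{\mathcal{B}}_t$ (or, if a single scalar bias is desired, under a further expectation over $\tilde{\mathcal{B}}_t$).
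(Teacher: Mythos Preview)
Your proposal is correct and follows essentially the same route as the paper: decompose the time-$t$ quantities into a time-$(t-\delta_{\indep})$ part plus an increment, use Lemma~\ref{lemma:indep_delta} on the first part, and use the Theorem~\ref{thm:unbiased} argument (applied to the sub-population $\tilde{\mathcal{B}}_t$) on the second to obtain $\widetilde{ATE}^{\tilde{\mathcal{B}}_t}_{t-\delta_{\indep}:t}$. Your version is in fact more explicit about the conditioning and the linearity step than the paper's short proof, which applies Definition~\ref{def:ate_tilde} directly to both sides and then cancels via Lemma~\ref{lemma:indep_delta}.
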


\begin{proof}
Note that from Definition \ref{def:ate_tilde} it follows that 
\begin{align*}
ATE - E[\widehat{ATE^{\tilde{\mathcal{B}}}_t}] &= ATE_{t-\delta_{\indep}}+\widetilde{ATE}_{t-\delta_{\indep}:t} - ATE_{t-\delta_{\indep}}^{ \tilde{\mathcal{B}}_t} + \widetilde{ATE}_{t-\delta_{\indep}:t}^{ \tilde{\mathcal{B}}_t}.
\end{align*} 
By Lemma \ref{thm:unbiased} the difference-in-means estimator under random sampling of bucket from $\tilde{\mathcal{B}}_t$ is an unbiased estimator of  $ATE_{t-\delta_{\indep}}$ which implies that $ATE_{t-\delta_{\indep}}^{ \tilde{\mathcal{B}}_t} = ATE_{t-\delta_{\indep}} $ which directly gives 
\begin{align*}
ATE - E[\widehat{ATE^{\tilde{\mathcal{B}}}_t}] &= ATE_{t-\delta_{\indep}}+\widetilde{ATE}_{t-\delta_{\indep}:t} - ATE_{t-\delta_{\indep}} + \widetilde{ATE}_{t-\delta_{\indep}:t}^{ \tilde{\mathcal{B}}_t}\\
&= \widetilde{ATE}_{t-\delta_{\indep}:t} - \widetilde{ATE}^{ \tilde{\mathcal{B}}_t}_{t-\delta_{\indep}:t} \,\,\blacksquare
\end{align*} 
\end{proof}

\section{Simulations}\label{app:simulations}
Here the simulation from Section \ref{sec:sim_corr} is repeated but with 100 buckets instead of 10000. All simulations can be replicated using the Julia code in the supplementary files. The parameters are repeated in Table \ref{tab:sim_corr2} for convenience.
\begin{table}[h!]\centering
\begin{tabular}{rcccccc}
&\multicolumn{6}{c}{Setting}\\
Parameter & 1 &  2&3&4&5&6\\
\hline
Experiment length distribution ($\mathbf{L}$) & $L1$& $L1$& $L_2$& $L1$& $L_2$& $L_2$\\
Experiment size distribution ($\mathbf{N}$) & $N_1$& $N_1$& $N_1$& $N_2$ & $N_2$ & $N_2$ \\
Avg. prop. of population experimented on($\mathbf{T}$) & 90\% & 50\% & 90\% & 90\% & 90\% & 50\% \\
Number of Buckets & \multicolumn{6}{c}{100}\\
Number of Days & \multicolumn{6}{c}{90}\\
Number of Starting points &  \multicolumn{6}{c}{50}\\
Number of Replications per starting point &  \multicolumn{6}{c}{10000}\\
\hline
\end{tabular}
\caption{Parameters of the Monte Carlo Simulation. For parameters with lists of values, a value is randomly (uniformly) drawn when a new experiment is started.}\label{tab:sim_corr2}
\end{table}
Figures \ref{fig:sim_corr12} and \ref{fig:sim_corr22} display the simulation results. The patterns are the same as in Section \ref{sec:sim_corr}, but the biases are generally larger for the same settings when the number of buckets is smaller. This is expected, as the heterogeneity is 'fixed' between these settings, in the sense that the ATE's for the first time points are drawn from the same normal distribution. This implies that there are going to be more and more buckets (when the number of buckets increase) with similar values, in turn implying that the heterogeneity between samples will decrease.

\begin{figure}[h!]\centering
\caption{The ATE1 bias, bucket availability correlation, and bucket sampling correlation plotted as a function of $\delta$, for settings 1-2. The parameters for each setting are given in Table \ref{tab:sim_corr}. }\label{fig:sim_corr12}
\includegraphics[scale=1.2, trim={5cm 16.1cm 3cm 4.5cm},page=11, clip=TRUE]{plots.pdf}
\end{figure}
\begin{figure}[h!]\centering
\caption{The ATE1 bias, bucket availability correlation, and bucket sampling correlation plotted as a function of $\delta$, for settings 3-6. The parameters for each setting are given in Table \ref{tab:sim_corr}. }\label{fig:sim_corr22}
\includegraphics[scale=1.2, trim={5cm 8.5cm 3cm 4.5cm},page=12, clip=TRUE]{plots.pdf}
\end{figure}

\end{document}